\newcommand{\ifabs}[2]{#2}
\renewcommand{\mathbf}[1]{\boldsymbol{#1}}
\newtheorem{theorem}{Theorem}[section]
\newtheorem{lemma}[theorem]{Lemma}
\newtheorem{proposition}[theorem]{Proposition}
\newtheorem{definition}{Definition}[section]
\newtheorem{assumption}[theorem]{Assumption}
\newcommand{\floor}[1]{\ensuremath{\left\lfloor#1\right\rfloor}}
\newcommand{\E}[2][]{\ensuremath{\mathbb{E}_{#1}\insq{#2}}}
\newcommand{\insq}[1]{\left[#1\right]}
\newcommand{\concept}[1]{\emph{#1}}
\newcommand{\gibbs}{\mu}
\newcommand{\mgn}[2]{\gibbs^{#1}_{#2}}
\newcommand{\Rgibbs}[2]{\gibbs^{#1}_{#2}}
\newcommand{\Rmgn}[3]{\gibbs^{#1}_{#2,#3}}
\newcommand{\pr}{\mathrm{P}}
\newcommand{\clist}[1]{\mathcal{#1}}
\newcommand{\err}{\mathcal{E}}
\newcommand{\SAW}{\mathrm{SAW}}
\newcommand{\SABW}{\mathrm{SABW}}
\newcommand{\contr}{\delta}
\newcommand{\dist}{\mathrm{dist}}
\begin{document}
\title{Spatial Mixing of Coloring Random Graphs}

\author{ Yitong Yin\thanks{Supported by NSFC grants 61272081 and 61321491. }\\Nanjing University, China\\ \texttt{yinyt@nju.edu.cn}}


\date{}\maketitle
\begin{abstract}
We study the strong spatial mixing (decay of correlation) property of proper $q$-colorings of random graph $G(n, d/n)$ with a fixed~$d$.
The strong spatial mixing of coloring and related models have been extensively studied on graphs with bounded maximum degree.
However, for typical classes of graphs with bounded average degree, such as $G(n, d/n)$, an easy counterexample shows that colorings do not exhibit strong spatial mixing with high probability. 
Nevertheless, we show that for $q\ge\alpha d+\beta$ with $\alpha>2$ and sufficiently large $\beta=O(1)$,
with high probability proper $q$-colorings of random graph $G(n, d/n)$ exhibit strong spatial mixing \emph{with respect to an arbitrarily fixed vertex}. 
This is the first strong spatial mixing result for colorings of graphs with unbounded maximum degree. 
Our analysis of strong spatial mixing establishes  a block-wise correlation decay instead of the standard point-wise decay, which may be of interest by itself, especially for graphs with unbounded degree.




\end{abstract}

\section{Introduction}
\label{sec:introduction}
A proper $q$-coloring of a graph $G$ is an assignment of $q$ colors $\{1,2,\ldots, q\}$ to the vertices so that adjacent vertices receive different colors. 
Each coloring corresponds to a configuration in the $q$-state zero-temperature antiferromagnetic Potts model. The uniform probability space, known as the Gibbs measure, of proper $q$-colorings of the graph, receives extensive studies from both Theoretical Computer Science and Statistical Physics.

An important question concerned with the Gibbs measure is about the mixing rate of Glauber dynamics, usually formulated as: 
on graphs with {maximum degree} $d$, assuming $q\ge\alpha d+\beta$, the lower bounds for $\alpha$ and $\beta$ to guarantee rapidly mixing of the Glauber dynamics over proper $q$-colorings. (See~\cite{frieze2007survey} for a survey.)


Recently, much attention has been focused on the spatial mixing (correlation decay) aspect  of the Gibbs measure, 
which is concerned with the case where the site-to-boundary correlations in the Gibbs measure decay exponentially to zero with distance.
In Statistical Physics, spatial mixing implies the uniqueness of infinite-volume Gibbs measure. 
In Theoretical Computer Science, a stronger notion is considered: 
the \concept{strong spatial mixing} introduced in Weitz's thesis~\cite{Weitz04}. 
Here, the exponential decay of site-to-boundary correlations is required to hold even conditioning on an arbitrarily fixed boundary.
Strong spatial mixing is interesting to Computer Science because it may imply efficient approximation algorithms for counting and sampling.
This implication was fully understood for two-state spin systems. For multi-state spin systems such as coloring, this algorithmic implication of strong spatial mixing is only known for special classes of graphs, such as neighborhood-amenable (slow-growing) graphs~\cite{goldberg2005strong}.
Strong spatial mixing of proper $q$-coloring has been proved for classes of degree-bounded graphs, including regular trees~\cite{ge2011strong}, lattices graphs~\cite{goldberg2005strong}, and finally the general degree-bounded triangle-free graphs~\cite{gamarnik2012strong}, all with the same $\alpha>\alpha^*$ bound where $\alpha^*=1.763...$ is the unique solution to $x^x=\mathrm{e}$. 

All these temporal and spatial mixing results are established for graphs with bounded \emph{maximum degree}. It is then natural to ask what happens for classes of graphs with bounded \emph{average degree}. 
A natural model for the ``typical'' graphs with bounded average degree $d$ is the Erd\"os-R\'enyi random graph $G(n,d/n)$.
In this model, the Gibbs measure of proper $q$-colorings becomes more complicated because the maximum degree is unbounded and the decision of colorability is nontrivial.
Nevertheless, it was discovered in~\cite{col_DFFV06} that for $G(n,d/n)$ the rapid mixing of (block) Glauber dynamics over the proper $q$-colorings  can be guaranteed by a $q=O(\log\log n/\log\log\log n)$, much smaller than the maximum degree of $G(n,d/n)$. This upper bound on the number of colors was later reduced to a constant $q=\mathrm{poly}(d)$  in~\cite{efthymiou2007randomly} and independently in~\cite{mossel2008rapid,mossel2010gibbs}, and very recently to a linear $q\ge \alpha d+\beta$ with $\alpha=5.5$ in~\cite{efthymiou2013mcmc}.

On the spatial mixing side, the strong spatial mixing of the models which are simpler than coloring has been studied on random graph $G(n,d/n)$, or other classes of graphs with bounded average degree.
Recently in~\cite{sinclair2013spatial}, such average-degree based strong spatial mixing is established for the independent sets of graphs with bounded \concept{connective constant}. Since $G(n,d/n)$ has connective constant $\approx d$ with high probability, this result is naturally translated to $G(n,d/n)$.

It is then an important open question to ask about the conditions for the spatial mixing of colorings of graphs with bounded average degree.
The following simple example shows that this can be very hard to achieve: Consider a  long path of $\ell$ vertices, each adjacent to $q-2$ isolated vertices, where $q$ is the number of colors. When the path is sufficiently long, the connective constant of this graph can be arbitrarily close to 1. However, colors of those isolated vertices can be properly fixed 
to make the remaining path effectively a 2-coloring instance, which certainly has long-range correlation, refuting the existence of strong spatial mixing.

More devastatingly, it is easy to see that for any constant $q$, with high probability the random graph $G(n,d/n)$ contains a path of length $\ell=\Theta(\log n)$ in which every vertex has degree $q-2$. As in the above example, even in a weaker sense of site-to-site correlation which was considered in~\cite{goldberg2005strong}, this forbids the strong spatial mixing up to a distance $\Theta(\log n)$. Meanwhile, it is well known that the diameter of $G(n,d/n)$ is $O(\log n)$ with high probability. 
So the strong spatial mixing of colorings of random graph $G(n,d/n)$ cannot hold except for a narrow range of distances in $\Theta(\log n)$.

In this case, inspired by the studies of spatial mixing in rooted trees, where only the decay of correlation to the root is considered,
we propose to study the strong spatial mixing \emph{with respect to a fixed vertex}, instead of all vertices.



\begin{assumption}\label{assumption}
We make following assumptions:
\begin{itemize}
\item $d> 1$ is fixed, and $q\ge\alpha d+\beta$ for $\alpha>2$ and sufficiently large $\beta=O(1)$ ($\beta\ge 23$ is fine);
\item $v\in V$ is arbitrarily fixed and $G=(V,E)$ is a random graph drawn from $G(n,d/n)$, where $n$ is sufficiently large. 
\end{itemize}
\end{assumption}

Note that vertex $v$ is fixed independently of the sampling of random graph.
With these assumptions we prove the following theorem. 
\begin{theorem}\label{theorem-ssm-random-graph}
Let  $q,v$ and $G$ satisfy Assumption~\ref{assumption},
and $t(n)=\omega(1)$ an arbitrary super-constant function.
With high probability, $G$ is $q$-colorable and the following holds: 
for any region $R\subset V$ containing $v$, whose vertex boundary is $\partial R$, for any feasible colorings $\sigma,\tau\in[q]^{\partial R}$ partially specified on $\partial R$ which differ only at vertices that are at least $t(n)$ distance away from $v$ in $G$,  
for some constants $C_1,C_2>0$ depending only on $d$ and $q$, it holds that
\[
|\Pr[c(v)=x\mid \sigma]-\Pr[c(v)=x\mid \tau]|\le C_1\exp(-C_2\cdot\dist(v,\Delta)),
\]
for a uniform random proper $q$-coloring $c$ of $G$ and any $x\in[q]$,
where $\Delta\subset\partial R$ is the vertex set on which $\sigma$ and $\tau$ differ, and $\dist(v,\Delta)$ denotes the shortest distance in $G$ between $v$ and any vertex in $\Delta$.
\end{theorem}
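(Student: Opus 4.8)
The plan is to establish strong spatial mixing by combining a structural analysis of the random graph with a correlation-decay recursion adapted to unbounded degree. The key difficulty is that the standard point-wise (vertex-by-vertex) correlation decay argument for colorings — which relies on a uniform degree bound — breaks down here because $G(n,d/n)$ has vertices of degree $\Theta(\log n/\log\log n)$, and moreover contains the ``bad'' long induced paths of degree $q-2$ described in the introduction. To circumvent this, I would first expose the local structure of $G$ around the fixed vertex $v$: with high probability the ball of radius $t(n)$ (or up to the relevant distance $\dist(v,\Delta)$, which may be as large as $\Theta(\log n)$) around $v$ decomposes into a ``tree-like'' part with only a few short cycles, together with finitely many dense ``obstructions'' (small subgraphs of bounded excess, plus the near-$(q-2)$-regular paths). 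The crucial point is that since $v$ is fixed \emph{before} the graph is drawn, the expected number of such obstructions within distance $t(n)$ of $v$ is $o(1)$ for the worst ones and $O(1)$-controlled for the rest; in particular, with high probability there is no degree-$(q-2)$ induced path through the neighbourhood of $v$ of length comparable to $t(n)$.

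Next I would set up the recursion. Rather than unwinding the coloring Gibbs measure one vertex at a time via the self-avoiding-walk tree, I would work \emph{block-wise}: group the vertices of the local neighbourhood of $v$ into blocks of bounded size (e.g. connected clusters separated at each ``branching'' level), and track how the marginal distribution of $c(v)$, viewed as a vector in the probability simplex over $[q]$, depends on the boundary condition propagated through successive blocks. Within each block, because the block has bounded size, one can bound the one-step contraction of the relevant metric (total variation, or the $\ell_\infty$-type metric used in Gamarnik--Katz--Misra / Goldberg--Martin--Paterson style arguments) by a constant factor strictly less than $1$, \emph{provided} $q \ge \alpha d + \beta$ with $\alpha > 2$ and $\beta$ a large enough constant. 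The role of $\alpha>2$ (rather than the optimal $\alpha^*\approx 1.763$) is precisely to absorb the variability of block sizes and degrees: a vertex of degree $\Delta$ contributes a factor like $(1 - \Delta/q)$ or $\mathrm{e}^{-\Delta/q}$ to the one-step decay, and summing the ``loss'' against the branching of the tree-like part requires $q$ to beat twice the average degree $d$ rather than just $\alpha^* d$, with $\beta$ handling the bounded-excess obstructions and the occasional high-degree vertex.

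Then I would chain the block-wise contractions along the (tree-like) distance from $v$ to $\Delta$. Each block crossed contributes a factor $\le \gamma < 1$, and the number of blocks between $v$ and $\Delta$ is $\Omega(\dist(v,\Delta))$ with high probability (here the structural step is used again: no long obstruction can ``shortcut'' or ``freeze'' a stretch of the path, since those are exactly what we excluded near $v$). This yields the bound $C_1 \exp(-C_2 \dist(v,\Delta))$ with $C_2 = \Omega(-\log\gamma)$ and $C_1$ absorbing the $O(1)$ obstructions and the initial block; the $q$-colorability of $G$ with high probability follows from $q \ge \alpha d + \beta > d+1$ together with the standard first/second-moment or local-lemma argument on $G(n,d/n)$ (or can simply be cited). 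The main obstacle, and where most of the work lies, is making the block decomposition robust: one must choose blocks so that (i) each has bounded size even though individual vertices may not have bounded degree, (ii) the Gibbs conditional marginals factor cleanly across block boundaries (this needs the tree-like structure, so the few short cycles must be handled by merging them into blocks), and (iii) the per-block contraction estimate genuinely holds uniformly — this last is the technical heart, essentially a finite but non-uniform-degree version of the coloring correlation-decay inequality, and it is where the ``block-wise instead of point-wise'' innovation advertised in the abstract is doing its real work.
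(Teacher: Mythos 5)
Your high-level intuition — that point-wise decay fails because of unbounded degree, and that a block-wise argument is needed — is the right starting point, but the mechanism you propose for block-wise decay is genuinely different from (and, as stated, does not work like) what the paper does, and it contains a step that would fail.

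The central gap is in your claim: ``Within each block, because the block has bounded size, one can bound the one-step contraction of the relevant metric by a constant factor strictly less than $1$.'' This is false for $G(n,d/n)$: a vertex of degree $\Theta(\log n/\log\log n)$ placed in a bounded-size block forces the block to have $\Theta(\log n/\log\log n)$ boundary edges, and no Dobrushin- or Weitz-style contraction estimate across such a block is uniformly bounded below $1$ — the per-edge losses accumulate and the factor tends to $1$ (or blows up). The paper's blocks are \emph{not} bounded in size; a ``permissive block'' $B$ around $v$ is defined by the requirement that every vertex on its vertex boundary $\partial B$ satisfies $|L(u)| > d_G(u)+1$, and the minimal such block swallows every nearby high-degree (or color-starved) vertex, so $B$ can be arbitrarily large. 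Crucially, the paper does \emph{not} claim any contraction inside $B$: every interior (non-permissive) vertex contributes the multiplicative factor $\contr(u)=1$ (Definition~\ref{definition-SAW-decay}), i.e., a frozen region costs nothing but also gains nothing. All the decay comes from the permissive boundary vertices $v_i$, which contribute factors $\frac{1}{|L(v_i)| - d_G(v_i) - 1}$ in the recursion of Lemma~\ref{lemma-error-block-recursion}. The quantitative heart is Lemma~\ref{lemma-decay-expectation}, showing that for $q \ge 2d+4$ the expected per-permissive-vertex factor $\E{f_q(X)} < 1/d$ beats the branching $d$ of the SAW tree; this is what $\alpha>2$ buys, not a uniform per-block contraction.

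There are two further mismatches. First, the paper never extracts a ``tree-like neighbourhood plus finitely many dense obstructions'' decomposition, nor does it merge short cycles into blocks to restore conditional independence; it works directly on the self-avoiding walk tree $T_\SAW(G,v)$, which handles cycles exactly, and its structural high-probability estimate (Lemma~\ref{lemma-permissive-separator-TSAW}) is that along every self-avoiding walk of length $2t$ from $v$, some vertex in positions $t$ to $2t$ is permissive — a statement about permissive \emph{cutsets} in $T$, not about the absence of bad subgraphs. Second, your suggestion that the $\alpha>2$ threshold is there to ``absorb the variability of block sizes'' is off: block sizes play no role in the bound; $\alpha>2$ is exactly what makes the \emph{expected} decay along a self-avoiding walk in $G(n,d/n)$ (Lemmas~\ref{lemma-decay-expectation} and \ref{lemma-error-decay-random-graph}) fall below $1/d$ so that the number of self-avoiding walks to the cutset is defeated. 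In short, you need to replace ``bounded blocks each giving contraction $<1$'' with ``unbounded permissive blocks each giving factor $1$, and decay only at permissive cutsets, which exist with high probability at every distance scale.''
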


This is the first strong spatial mixing result for colorings of graphs with unbounded maximum degree.
Our technique is developed upon the error function method introduced in~\cite{gamarnik2012strong}, which uses a cleverly designed error function to measure the discrepancy of marginal distributions, and the strong spatial mixing is implied by an exponential decay of errors measured by this function.

In all existing techniques for strong spatial mixing of colorings, when the degree of a vertex is unbounded, a multiplicative factor of $\infty$ is contributed to the decay of correlation, which unavoidably ruins the decay.
However, in the real case for colorings of graphs with unbounded degree, a large-degree vertex may at most locally ``freeze'' the coloring, rather than nullify the existing decay of correlation. This limitation on the effect of large-degree vertex has not been addressed by any existing techniques for spatial mixing. 

We address this issue by considering a block-wise correlation decay, so that  within a block the coloring might be ``frozen'', but between blocks, 
the decay of correlation is as in that between vertices in the degree-bounded case.
This analysis of block-wise correlation decay can be seen as a spatial analog to the block dynamics over colorings of random graphs, and is the first time that such an idea is used in the analysis of spatial mixing.





\paragraph{Related work}
As one of the most important random CSP, the decision problem of coloring sparse random graphs has been extensively studied, e.g. in~\cite{achlioptas2005two, coja2013chasing}. 
Monte Carlo algorithms for sampling random coloring in sparse random graphs were studied in~\cite{dyer2010randomly, efthymiou2007randomly, mossel2008rapid, mossel2010gibbs, efthymiou2013mcmc},  and in~\cite{efthymiou2012simple}, a non-Monte-Carlo algorithm was given for the same problem which
uses less colors but has worse error dependency than the Monte-Carlo algorithms.
In~\cite{GK07,lu2013improved} the correlation decay on computation tree for coloring was studied which implies FPTAS for counting coloring.

\section{Preliminaries}
\label{sec:preliminaries}
\paragraph{Graph coloring.}
Let $G=(V,E)$ be an undirected graph. For each vertex $v\in V$, let $d_G(v)$ denote the degree of $v$. For any $u,v\in V$, let $\dist_G(u,v)$ denote the distance between $u$ and $v$ in $G$; and for any vertex sets $S, T\subseteq V$, let $\dist_G(u, S)=\min_{v\in S}\dist_G(u,v)$ and $\dist_G(S,T)=\min_{u\in S, v\in T}\dist_G(u,v)$. The subscripts can be omitted if graph $G$ is assumed in context. For any vertex set $S\subset V$, we use $\partial S=\{v\not\in S\mid uv\in E,u\in S\}$ to denote the \concept{vertex boundary} of $S$, and use $\delta S=\{uv\in E\mid u\in S,v\not\in S\}$ to denote the \concept{edge boundary} of $S$.

We consider the \emph{list-coloring problem}, which is a generalization of $q$-coloring problem. Let $q>0$ be a finite integer, a pair $(G,\clist{L})$ is called a \concept{list-coloring instance} if $G=(V,E)$ is an undirected graph, and $\clist{L}=(L(v): v\in V)$ is a sequence of lists where for each vertex $v\in V$, $L(v)\subseteq [q]$ is a list of colors from $[q]=\{1,2,\ldots,q\}$ associated with vertex $v$. A $\sigma\in[q]^V$ is a \concept{proper coloring} of $(G,\clist{L})$ if $\sigma(v)\in L(v)$ for every vertex $v\in V$ and no two adjacent vertices in $G$ are assigned with the same color by $\sigma$. A list-coloring instance $(G,\clist{L})$ is said to be \concept{feasible} or \concept{colorable} if there exists a proper coloring of $(G,\clist{L})$. A coloring can also be partially specified on a subset of vertices in $G$. For $S\subseteq V$, let $L(S)=\{\sigma\in[q]^S\mid \forall v\in V,\sigma(v)\in L(v)\}$ denote the set of all possible colorings (not necessarily proper) of the vertices in $S$. A coloring $\sigma\in L(S)$ partially specified on a subset $S\subseteq V$ of vertices is said to be \concept{feasible} if there is a proper coloring $\tau$ of $(G,\clist{L})$ such that $\sigma$ and $\tau$ are consistent over set $S$. A coloring $\sigma\in L(S)$ partially specified on a subset $S\subseteq V$ of vertices is said to be \concept{proper} or \concept{locally feasible} if $\sigma$ is a proper coloring of $(G[S], \clist{L}_S)$ where $G[S]$ is the subgraph of $G$ induced by $S$ and $\clist{L}_S=(L(v):v\in S)$ denotes the sequence $\clist{L}$ of lists restricted on set $S$ of vertices. For any $S\subseteq V$, we use $L^*(S)$ to denote the set of proper colorings of $S$.

When $L(v)=[q]$ for all vertices $v\in V$, a list-coloring instance $(G,\clist{L})$ becomes an instance for $q$-coloring, 
which we denote as $(G,[q])$.



\paragraph{Self-avoiding walk (SAW) tree.}
Given a graph $G(V, E)$ and a vertex $v\in V$, a tree $T$ rooted by $v$ can be naturally constructed from all self-avoiding walks starting from $v$ so that each walk corresponds to a vertex in $T$, and each walk $p$ is the parent of walks $(p,u)$ where $u\in V$ is a vertex. We use $T_\SAW(G,v)=T$ to denote this tree constructed as above, and call it a \concept{self-avoiding walk tree (SAW)} of graph $G$.

\paragraph{Gibbs measure and strong spatial mixing.}
A feasible list-coloring instance $(G,\clist{L})$ gives rise to a natural probability distribution $\mu=\mu_{G,\clist{L}}$, which is the uniform distribution over all proper list-colorings. This distribution $\mu$ is also called the \concept{Gibbs measure} of list-colorings. We also a notation of  $\pr_{G,\clist{L}}(\text{event}(c))=\Pr[\text{event}(c)]$ to evaluate probability of an event defined on a uniform random proper coloring $c$ of $(G,\clist{L})$.
Let $B\subset V$ and $\Lambda\subset V$. For any feasible coloring $\sigma\in L(\Lambda)$ partially specified on vertex set $\Lambda$, we use $\mu_B^\sigma=\mu_{G,\clist{L},B}^\sigma$ to denote the marginal distribution over colorings of vertices in $B$ conditioning on that the coloring of vertices in $\Lambda$ is as specified by $\sigma$. And when $B=\{v\}$, we write $\mu_v^\sigma=\mu_{G,\clist{L},v}^\sigma=\mu_{G,\clist{L},\{v\}}^\sigma$. The list-coloring instance $(G,\clist{L})$ in the subscripts can be omitted if it is assumed in context.
Formally, for a uniformly random proper coloring $c$ of $(G,\clist{L})$, we have
\begin{align*}
\forall x\in L(v),
&\quad \mu_v^\sigma(x)=\pr_{G,\clist{L}}(c(v)=x\mid \sigma),\\ 
\forall \pi\in L(B),
&\quad \mu_B^\sigma(\pi)=\pr_{G,\clist{L}}(c(B)=\pi\mid \sigma). 
\end{align*}



The notion strong spatial mixing is introduced in~\cite{Weitz04,Weitz06} for independent sets and extended to colorings in~\cite{goldberg2005strong,gamarnik2012strong}.
\begin{definition}[Strong Spatial Mixing]
The Gibbs measure on proper $q$-colorings of a family $\mathcal{G}$ of finite graphs exhibits \concept{strong spatial mixing (SSM)} if there exist constants $C_1,C_2>0$ such that for any graph $G(V,E)\in\mathcal{G}$, any $v\in V, \Lambda\subseteq V$, and any two feasible $q$-colorings $\sigma,\tau\in[q]^\Lambda$, we have
\[
  \|\gibbs^{\sigma}_v-\gibbs^{\tau}_v\|_{\mathrm{TV}}\le C_1\exp(-C_2\mathrm{dist}(v,\Delta)),
\]
where $\Delta\subseteq\Lambda$ is the subset on which $\sigma$ and $\tau$ differ, and $\|\cdot\|_{\mathrm{TV}}$ is the total variation distance. 
\end{definition}
When the exponential bound relies on $\mathrm{dist}(v,\Lambda)$ instead of $\mathrm{dist}(v,\Delta)$, the definition becomes \concept{weak spatial mixing (WSM)}. The difference is SSM requires the exponential correlation decay continues to hold even conditioning on the coloring of a subset $\Lambda\setminus\Delta$ of vertices being arbitrarily (but feasibly) specified.


\paragraph{Random graph model}
The Erd\"os-R\'enyi random graph $G(n,p)$ is the graph with $n$ vertices $V$ and random edges $E$ where for each pair $\{u,v\}$, the edge $uv$ is chosen independently with probability $p$. We consider $G(n,d/n)$ with fixed $d>1$.


We say an event occurs \emph{with high probability (w.h.p.)} if the probability of the event is $1-o(1)$.

\newcommand{\SSSSMDefinition}{
\begin{definition}[Single-Site Strong Spatial Mixing]
The Gibbs measure on proper $q$-colorings of $G(V,E)$ exhibits signle-site strong spatial mixing at rate $\delta(\cdot)$ if for any $v\in V, \Lambda\subseteq V$ and any two feasible $\sigma_\Lambda,\tau_\Lambda\in[q]^\Lambda$ disagreeing on one vertex $u\in\Lambda$,
  \[
\|\mgn{\sigma_\Lambda}{v}-\mgn{\tau_\Lambda}{v}\|_{\mathrm{TV}}\le \delta(\mathrm{dist}(v,u)).
  \]
\end{definition}

\todo{The following is an alternative definition of SSSSM}

A vertex set $R\subseteq V$ is called a region, and $\partial R=\{\{u,v\}\in E\mid u\in R,v\not\in R\}$ denotes its edge boundary. An edge coloring $\tau\in[q]^{\partial R}$ of the boundary $\partial R$ is called an edge-boundary condition for region $R$. A configuaration $\sigma\in[q]^R$ is a proper $q$-coloring of region $R$ with boundary condition $\tau\in[q]^{\partial R}$ if $\sigma$ is a proper $q$-coloring of the induced subgraph $G[R]$ and $\sigma(u)\neq\tau(e)$ for every boundary edge $e=\{u,v\}\in\partial R$ with $u\in R$. 
Given a region $R$ and an $R$-feasible edge-boundary condition $\tau\in[q]^{\partial R}$, we use $\Rgibbs{\tau}{R}$ to denote the uniform distribution over all proper $q$-coloring of region $R$ with boundary condition $\tau$, and use $\Rmgn{\tau}{R}{v}$ to denote the correpsonding marginal distribution at vertex $v\in R$.

\begin{definition}[Single-Site Strong Spatial Mixing]
The Gibbs measure on proper $q$-colorings of $G(V,E)$ exhibits single-site strong spatial mixing at rate $\delta(\cdot)$ if for any region $R\subseteq V$, vertex $v\in R$, and any two $R$-feasible edge-boundary conditions $\sigma,\tau\in[q]^{\partial R}$ which differ at one boundary edge $e\in\partial R$,
\[
\|\Rmgn{\sigma}{R}{v}-\Rmgn{\tau}{R}{v}\|_{\mathrm{TV}}\le \delta(\mathrm{dist}(v,e)).
  \]
\end{definition}
}





\section{Correlation decay along self-avoiding walks}
\label{sec:SAW-tree}
In this section, we analyze the propagation of errors between marginal distributions measured by a special norm introduced in~\cite{gamarnik2012strong} in general degree-unbounded graphs.
Throughout this section, we assume $(G,\clist{L})$ to be a list-coloring instance with $G=(V,E)$ and $\clist{L}=(L(v): v\in V)$ where each $L(v)\subseteq[q]$. 

The following error function is introduced in~\cite{gamarnik2012strong}.

\begin{definition}[error function]
Let $\mu_1:\Omega\rightarrow [0,1]$ and $\mu_2:\Omega\rightarrow[0,1]$ be two probability measures over the same sample space $\Omega$. We define
\begin{align*}
\err(\mu_1,\mu_2)
&=\max_{x,y\in \Omega}\left(\log\left(\frac{\mu_1(x)}{\mu_2(x)}\right)-\log\left(\frac{\mu_1(y)}{\mu_2(y)}\right)\right),
\end{align*}
with the convention that $0/0=1$ and $\infty-\infty=0$.
\end{definition}

We assume $(G,\clist{L})$ to be feasible so that for vertex set $B\subset V$ and feasible colorings $\sigma,\tau\in L(\Lambda)$ of vertex set $\Lambda\subset V$, the marginal probabilities $\mu_B^\sigma$ and $\mu_B^\tau$ are well-defined.
The strong spatial mixing is proved by establishing a propagation of errors $\err(\mu_B^\sigma,\mu_B^\tau)$. Note that unlike in bounded-degree graphs, in general the value of $\err(\mu_B^\sigma,\mu_B^\tau)$ can be infinite, which occurs when the possibility of a particular coloring of $B$ is changed by  conditioning on $\sigma$ and $\tau$. This is avoided when a vertex cut with certain ``permissive'' property separating $B$ from the boundary. \ifabs{The following proposition is proved in Appendix~\ref{appendix-SAW}.}{}

\begin{proposition}\label{proposition-permissive-boundary}
If there is a $S\subset V\setminus(B\cup\Lambda)$ such that $|L(v)|>d(v)+1$ for every $v\in S$ and removing $S$ disconnects $B$ and $\Lambda$, 
then $\err(\mu_{B}^\sigma,\mu_{B}^\tau)$ is finite for any feasible colorings $\sigma,\tau\in L(\Lambda)$. 
\end{proposition}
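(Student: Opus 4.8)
The plan is to reduce the statement to a claim about the \emph{supports} of the two marginals, and then prove that claim by a greedy gluing argument across the separator $S$. First, note that $\err(\mu_B^\sigma,\mu_B^\tau)$ is the maximum, over colorings $\pi,\rho$ of $B$, of the quantity $\log(\mu_B^\sigma(\pi)/\mu_B^\tau(\pi)) - \log(\mu_B^\sigma(\rho)/\mu_B^\tau(\rho))$; with the conventions $0/0=1$ and $\infty-\infty=0$, every such term is finite provided that for each $\pi$ the values $\mu_B^\sigma(\pi)$ and $\mu_B^\tau(\pi)$ are simultaneously $0$ or simultaneously positive. Since there are only finitely many colorings of $B$, it therefore suffices to show $\mathrm{supp}(\mu_B^\sigma)=\mathrm{supp}(\mu_B^\tau)$.

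Next I would record the decomposition of $G$ induced by the cut $S$. Let $A$ be the union of the connected components of $G-S$ that contain a vertex of $B$, and let $D=V\setminus(S\cup A)$. Since $S\cap(B\cup\Lambda)=\emptyset$, we have $B\subseteq A$; since deleting $S$ separates $B$ from $\Lambda$, no vertex of $\Lambda$ lies in $A$, so $\Lambda\subseteq D$; and no edge of $G$ joins $A$ to $D$, since its $D$-endpoint would otherwise be reachable from $B$ in $G-S$ and hence lie in $A$. Thus $V$ is partitioned into $A$, $S$, $D$, with edges occurring only inside a single part or across $A$--$S$ or $S$--$D$.

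The heart of the argument is the gluing step. Fix a coloring $\pi$ of $B$ with $\mu_B^\sigma(\pi)>0$; the goal is $\mu_B^\tau(\pi)>0$. Choose a proper coloring $\rho$ of $(G,\clist{L})$ with $\rho|_\Lambda=\sigma$ and $\rho|_B=\pi$ (which exists by definition of $\mu_B^\sigma(\pi)>0$), and a proper coloring $\rho'$ of $(G,\clist{L})$ with $\rho'|_\Lambda=\tau$ (which exists because $\tau$ is feasible). Define $\eta$ on $V$ by $\eta|_A=\rho|_A$ and $\eta|_D=\rho'|_D$, and then color the vertices of $S$ greedily in an arbitrary order: when a vertex $v\in S$ is processed, its already-colored neighbors (all lying in $A\cup D$, already fixed, or among the earlier vertices of $S$) forbid at most $d(v)$ colors, so since $|L(v)|>d(v)+1$ some color of $L(v)$ remains available. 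The resulting $\eta$ is a proper list-coloring of $G$: within $A$ and within $D$ because $\rho,\rho'$ were proper, across $A$--$S$, $S$--$D$ and within $S$ by the greedy choice, and there are no $A$--$D$ edges to violate. As $B\subseteq A$ and $\Lambda\subseteq D$, $\eta$ agrees with $\pi$ on $B$ and with $\tau$ on $\Lambda$, so $\mu_B^\tau(\pi)>0$. The symmetric argument gives the converse, the supports coincide, and $\err(\mu_B^\sigma,\mu_B^\tau)$ is finite.

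The one place demanding care — and the only real obstacle — is the cut structure in the second step: one must verify that the components of $G-S$ meeting $B$ meet neither $\Lambda$ nor, through an edge, the rest of the graph, so that the colorings imposed on $A$ and on $D$ can never conflict and all gluing constraints are concentrated on the permissive vertices of $S$. Everything else is routine; in fact the greedy coloring only uses $|L(v)|>d(v)$, so the hypothesis $|L(v)|>d(v)+1$ is comfortably sufficient, and no probabilistic input enters — the proposition is purely combinatorial.
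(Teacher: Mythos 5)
Your proof is correct and follows essentially the same approach as the paper: both decompose the graph across the permissive separator $S$, glue an arbitrary proper coloring on the $B$-side with one consistent with the given boundary condition on the $\Lambda$-side, and color $S$ greedily using $|L(v)|>d(v)+1$ (as you note, $|L(v)|>d(v)$ already suffices for this step). Your writeup makes the cut structure and the support-equality reduction more explicit than the paper's terse version, but the core gluing idea is the same.
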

\newcommand{\ProofPermissiveBoundary}{
It is sufficient to show that $\pr_{G,\clist{L}}(c(B)=\pi\mid\sigma)>0$ whenever $\pr_{G,\clist{L}}(c(B)=\pi\mid \tau)>0$.
Suppose that removing $B$ separates the graph into subgraphs $G_1$ and $G_2$ where $G_1$ contains $B$ and $G_2$ contains $\Lambda$.
For any proper coloring $\pi'$ of $G_1$ and any proper coloring $\sigma'$ of $G_2$, we must have $\pr_{G,\clist{L}}(c(G_1)=\pi'\wedge c(G_2)=\sigma')>0$, because $|L(v)|>d(v)+1$ for every $v\in S$, and hence it is always possible to coloring $S$ in a greedy fashion to complete a proper coloring $\pi'$ of $G_1$ along with a proper coloring $\sigma'$ of $G_2$ to a proper coloring of the entire graph $G$. Note that this implies the lemma because now a coloring $\pi$ of $B$ is possible if and only if it can be completed to a proper coloring of $G_1$, a property  independent of $\sigma$ and $\tau$.
}
\ifabs{
}{
\begin{proof}
\ProofPermissiveBoundary
\end{proof}
}

This motivates the following definition of permissive vertex and vertex set.
\begin{definition}
Given a list-coloring instance $(G,\clist{L})$, a vertex $v$ is said to be \concept{permissive} in $(G,\clist{L})$ if for all neighbors $u$ of $v$ and $u=v$, it holds that $|L(u)|>d(u)+1$.
A set $S$ of vertices is said to be permissive if all vertices in $S$ are permissive.
\end{definition}


Let $T=T_{\SAW}(G,v)$ be the self-avoiding walk tree of graph $G$ expanded from vertex $v$. 
Recall that every vertex $u$ in $T$ can be naturally identified (many-to-one) with the vertex in $G$ at which the corresponding self-avoiding walk ends (which we also denote by the same letter $u$).


\begin{definition}\label{definition-SAW-decay}
Given a list-coloring instance $(G,\clist{L})$, let $v\in V$, $T=T_{\SAW}(G,v)$, and $S$  a set of vertices in $T$. 
Suppose that the root $v$ has $m$ children $v_1,v_2,\ldots,v_m$ in $T$ and for $i=1,2\ldots,m$, let $T_i$ denote the subtree rooted by $v_i$. The quantity $\err_{T,\clist{L},S}$ is recursively defined as follows
\begin{align*}
\err_{T,\clist{L},S}
=
\begin{cases}
\displaystyle\sum\limits_{i=1}^m \contr\left(v_i\right) \cdot\err_{T_i,\clist{L},S} & \mbox{if }v\not\in S,\\
3q & \mbox{if }v\in S,
\end{cases}
\end{align*}
where $\contr(u)$ is a piecewise function defined as that for any vertex $u$ in $T$,
\begin{align*}
\contr(u)
=
\begin{cases}
\frac{1}{|L(u)|)-d_G(u)-1} & \mbox{if } |L(u)|>d_G(u)+1,\\
1 & \mbox{otherwise},
\end{cases}
\end{align*}
where $d_G(v)$ is the degree in the original graph $G$ instead of the degree in SAW-tree $T$.

In particular, when $(G,\clist{L})$ is a $q$-coloring instance $(G,[q])$, we denote this quantity as $\err_{T,[q],S}$.
\end{definition}



To state the main theorem of this section, we need one more definition.
\begin{definition}\label{definition-cutset}
Let $G=(V,E)$, $v\in V$, $\Delta\subset V$, and $T=T_{\SAW}(G,v)$. A set $S$ of vertices in $T$ is a \concept{cutset} in $T$ for $v$ and $\Delta$ if: (1) no vertex in $S$ is identified to $v$ or any vertex $u$ with $\dist(u,\Delta)<2$ by $T_\SAW(G,v)$; and (2) any self-avoiding walk from $v$ to a vertex in $\Delta$ must intersect $S$ in $T$. A cutset $S$ in $T$ for $v$ and $\Delta$ is said to be permissive in $(G,\clist{L})$ if every vertex in $S$ is identified with a permissive vertex in $(G,\clist{L})$ by $T_\SAW(G,v)$.
\end{definition}

The following theorem is the main theorem of this section, which bounds the error function $\err(\mu_{v}^\sigma,\mu_{v}^\tau)$ by the $\err_{T,\clist{L},S}$ defined in Definition~\ref{definition-SAW-decay} when there is a good cutset in the SAW tree.

\begin{theorem}\label{thm-SAW-decay}
Let $(G,\clist{L})$ be a feasible list-coloring instance where $G=(V,E)$ and $\clist{L}=(L(v)\subseteq[q]: v\in V)$. 
Let $v\in V$, $\Lambda\subset V$ and $\Delta\subseteq\Lambda$ be arbitrary, and $T=T_{\SAW}(G,v)$.
If there is a permissive cutset $S$ in $T$ for $v$ and $\Delta$, 
then for any feasible colorings $\sigma,\tau\in L(\Lambda)$ 
which differ only on $\Delta$, it holds that
\[
\err(\mu_{v}^\sigma,\mu_{v}^\tau)\le \err_{T,\clist{L},S}.
\]
\end{theorem}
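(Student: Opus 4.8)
The plan is to prove Theorem~\ref{thm-SAW-decay} by unfolding the marginal $\mu_v^\sigma$ along the self-avoiding walk tree $T$ and tracking the error function through one level of the recursion, then inducting on the structure of $T$ down to the cutset $S$. First I would recall the standard tree recursion for list-colorings: if the root $v$ has children $v_1,\dots,v_m$ in $T$ with subtrees $T_1,\dots,T_m$, then for each color $x\in L(v)$ the probability $\mu_{v,T}^\sigma(x)$ is proportional to $\prod_{i=1}^m \bigl(1-\mu_{v_i,T_i}^{\sigma}(x)\bigr)$, where the conditioning $\sigma$ restricted to each $T_i$ is the natural one and the SAW-tree construction guarantees these subtree instances are genuinely independent (this is precisely why one passes to $T_\SAW(G,v)$ rather than working in $G$ directly; the marginal at $v$ in $G$ equals the marginal at the root of $T$, a fact I would cite as standard, e.g.\ from the Weitz-type correspondence). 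Here I should be careful that when $|L(v_i)| \le d_G(v_i)+1$ the term $1-\mu_{v_i,T_i}^\sigma(x)$ may vanish, but Proposition~\ref{proposition-permissive-boundary} together with the permissiveness of the cutset $S$ guarantees $\err(\mu_v^\sigma,\mu_v^\tau)$ is finite, so the $\infty-\infty=0$ convention is handled cleanly; the condition $\dist(u,\Delta)\ge 2$ for $u\in S$ is what ensures $S$ actually separates $v$ from $\Delta$ inside the tree while the vertices adjacent to $\Delta$ are not themselves required to be permissive.

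The core computation is a \textbf{one-step contraction estimate}: I claim that for the root $v\notin S$,
\[
\err(\mu_{v,T}^\sigma,\mu_{v,T}^\tau)\le \sum_{i=1}^m \contr(v_i)\cdot\err(\mu_{v_i,T_i}^\sigma,\mu_{v_i,T_i}^\tau).
\]
To see this I would write $\log\mu_{v,T}^\sigma(x)-\log\mu_{v,T}^\tau(x)$ using the recursion; the normalizing constants contribute only an $x$-independent shift which cancels when we take the difference over two colors $x,y$ in the definition of $\err$. What remains is a sum over $i$ of terms of the form $\log\frac{1-p_i}{1-q_i}$ where $p_i=\mu_{v_i,T_i}^\sigma(x)$, $q_i=\mu_{v_i,T_i}^\tau(x)$ (and the analogous terms at $y$). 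One then bounds the sensitivity of $x\mapsto \log(1-\mu_{v_i,T_i}^\sigma(x))$ against the error at the child: a mean-value / convexity argument on the map shows the discrepancy $\bigl|\log\frac{1-p_i}{1-q_i} - \log\frac{1-p_i'}{1-q_i'}\bigr|$ (where primes denote evaluation at $y$) is at most $\frac{1}{|L(v_i)|-d_G(v_i)-1}$ times $\err(\mu_{v_i,T_i}^\sigma,\mu_{v_i,T_i}^\tau)$ when $v_i$ is permissive, and at most $1$ times that error otherwise — matching the definition of $\contr$. This is the step lifted from~\cite{gamarnik2012strong}; the only novelty is allowing the degree-unbounded / non-permissive branches to contribute a factor of $1$ rather than blowing up, which is legitimate because the error being propagated is still finite thanks to the downstream permissive cutset.

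With the one-step estimate in hand, the theorem follows by \textbf{induction on the height of the truncation at $S$} (equivalently, on the maximal length of a self-avoiding walk from $v$ to $S$). The base case is a vertex $u\in S$: since $S$ is a cutset, $\sigma$ and $\tau$ restricted to the subtree below $u$ either agree or the whole subtree is severed from $\Delta$, but in any case the crude universal bound $\err(\mu_{u,T_u}^\sigma,\mu_{u,T_u}^\tau)\le 3q$ holds — this is just the observation that for any two probability distributions on a space of size at most $q$ with common support, the log-ratio spread is controlled (one shows $|\log(\mu_1(x)/\mu_2(x))|$ is bounded once both are bounded below by something like $1/q$, which permissiveness at and just below $u$ guarantees; a constant such as $3q$ absorbs the arithmetic). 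For a vertex $v\notin S$, if $v$ is below every branch's contact with $\Delta$ having been cut, or more simply by the cutset property every self-avoiding walk from each child $v_i$ toward $\Delta$ meets $S$, so the inductive hypothesis applies to each $(T_i, S\cap T_i)$, giving $\err(\mu_{v_i,T_i}^\sigma,\mu_{v_i,T_i}^\tau)\le \err_{T_i,\clist{L},S}$; plugging into the one-step estimate and comparing with the recursive definition of $\err_{T,\clist{L},S}$ closes the induction. The main obstacle I anticipate is the one-step contraction inequality itself — specifically making the mean-value bound on $\log(1-\mu_{v_i,T_i}^{\cdot}(x))$ rigorous uniformly, including the boundary cases where a child's list is barely larger than its degree or where $\mu_{v_i,T_i}^\sigma(x)$ is close to $1$; handling the $\infty-\infty=0$ convention consistently throughout the induction (rather than just at the top) also requires a little care, which is exactly why the permissive-cutset hypothesis is threaded through every level.
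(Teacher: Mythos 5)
Your proposal rests on two claims that are precisely the points the paper is careful to avoid, and this makes the argument break down at the first step.

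First, you invoke ``the standard tree recursion for list-colorings'' in which $\mu_{v,T}^\sigma(x)\propto\prod_i(1-\mu_{v_i,T_i}^\sigma(x))$, with the assertion that ``the marginal at $v$ in $G$ equals the marginal at the root of $T$, a fact I would cite as standard, e.g.\ from the Weitz-type correspondence.'' For colorings (and multi-spin systems generally) that correspondence is \emph{not} standard — it does not hold. Weitz's SAW-tree equivalence is a two-spin phenomenon. The reason the paper (following Gamarnik--Katz) works instead with the telescoping recursion of Lemma~\ref{lemma-partial-pinning-invariant} is exactly that one cannot pass the graph marginal through to a literal tree marginal. The recursion used here proceeds by forming a modified \emph{graph} instance $(G_B,\clist{L}_{i,\pi,\rho})$ obtained by deleting a block $B$ and pinning boundary colors; the SAW tree $T$ is used only as a bookkeeping device to charge the contraction factors $\contr(\cdot)$ to self-avoiding walks, not as an equivalent instance on which the Gibbs measure lives. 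So the recursion you want to ``unfold along $T$'' does not exist.

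Second, even granting some form of one-step recursion, your point-wise contraction estimate
\[
\err(\mu_{v}^\sigma,\mu_{v}^\tau)\le \sum_{i}\contr(v_i)\,\err(\mu_{v_i}^\sigma,\mu_{v_i}^\tau)
\]
cannot be pushed through at a non-permissive child $v_i$. The mean-value computation you describe produces a factor $\frac{p_i}{1-p_i}$ with $p_i$ a value of the child marginal. To turn that into $\frac{1}{|L(v_i)|-d(v_i)-1}$ one needs the marginal bound $p_i\le\frac{1}{|L(v_i)|-d(v_i)}$ of Lemma~\ref{lemma-marginal-bounds}, which requires $|L(v_i)|>d(v_i)+1$, i.e.\ $v_i$ permissive. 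When $v_i$ is not permissive, $p_i$ can be arbitrarily close to $1$ (the color can be forced), so $\frac{p_i}{1-p_i}$ blows up and is \emph{not} ``at most $1$.'' You assert the factor is $1$ there ``matching the definition of $\contr$,'' but Definition~\ref{definition-SAW-decay} sets $\contr(u)=1$ for non-permissive $u$ as part of the \emph{target} quantity $\err_{T,\clist{L},S}$ being constructed, not as a proven per-vertex contraction. The whole point of the paper's block-wise argument (Lemmas~\ref{lemma-error-block-recursion} and the proof of Lemma~\ref{lemma-SAW-decay}) is to skip over non-permissive vertices entirely: one takes the \emph{minimal permissive block} $B$ around $v$, whose boundary vertices are permissive by construction, applies the contraction only at those boundary vertices, and observes that the internal, non-permissive vertices of $B$ are accounted for in $\err_{T,\clist{L},S}$ by factors of $1$. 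The per-vertex contraction you propose is exactly the step the paper says ``unavoidably ruins the decay'' if attempted naively at unbounded-degree vertices.

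A smaller omission: you attempt to prove the theorem directly rather than first reducing SSM to WSM. The paper derives Theorem~\ref{thm-SAW-decay} from the weak-mixing Lemma~\ref{lemma-SAW-decay} by absorbing the agreeing boundary $\Gamma=\Lambda\setminus\Delta$ into a modified instance $(G_\Gamma,\clist{L}_\eta)$ and checking that the cutset remains permissive and the SAW quantity does not increase; you never address how the fixed part of the boundary condition is handled, and it does not simply ``cancel'' in your proposed tree recursion since the conditioning enters nontrivially through which subtrees are pruned or pinned.
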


This theorem is implied by the following weak spatial mixing version of the theorem.
\begin{lemma}\label{lemma-SAW-decay}
Let $(G,\clist{L})$ be a feasible list-coloring instance where $G=(V,E)$ and $\clist{L}=(L(v)\subseteq[q]: v\in V)$. 
Let $v\in V$ and $\Delta\subseteq\Lambda$ be arbitrary, and $T=T_{\SAW}(G,v)$.
If there is a permissive cutset $S$ in $T$ for $v$ and $\Delta$, 
then for any feasible colorings $\sigma,\tau\in L(\Delta)$, it holds that
\[
\err(\mu_{v}^\sigma,\mu_{v}^\tau)\le \err_{T,\clist{L},S}.
\]
\end{lemma}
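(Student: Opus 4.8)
\textbf{Proof proposal for Lemma~\ref{lemma-SAW-decay}.}

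The plan is to prove the bound by induction on the structure of the SAW tree $T=T_{\SAW}(G,v)$, following the error-function recursion of Definition~\ref{definition-SAW-decay}. The base case is when $v\in S$: since $S$ is permissive and $v$ has a list of size $|L(v)|>d_G(v)+1$, the argument of Proposition~\ref{proposition-permissive-boundary} shows the marginal at $v$ is supported on all of $L^*(v)$ regardless of the boundary colorings, and a crude bound gives $\err(\mu_v^\sigma,\mu_v^\tau)\le 3q$, matching the definition. (I would be careful here that ``no vertex in $S$ is identified with $v$'' in the cutset definition means this base case only arises vacuously in the recursion when we have already descended; still, the bound $\err\le\err_{T,\clist L,S}$ must be verified at every node where the recursion terminates, i.e. at leaves of the truncated-at-$S$ tree, and at nodes in $\Delta$ or adjacent to $\Delta$ the boundary condition is the same on both sides so the error is $0$.)

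For the inductive step, let $v\notin S$ have children $v_1,\dots,v_m$ in $T$, with subtrees $T_i$. The key is the standard recursion for the marginal at the root of a SAW tree: conditioning on a coloring of the boundary, $\mu_v^\sigma(x)$ is proportional to a product over the $m$ subtrees of $(1-\mu_{v_i}^{\sigma}(x))$-type factors (where $\mu_{v_i}^\sigma$ is the marginal at $v_i$ in the subtree instance $(T_i,\clist L)$ with appropriate list restriction), normalized over $x\in L(v)$. I would write $\log(\mu_v^\sigma(x)/\mu_v^\tau(x))-\log(\mu_v^\sigma(y)/\mu_v^\tau(y))$ as a telescoping sum over the children, so that it suffices to bound, for each $i$, the contribution of replacing the $i$-th subtree's conditioning $\sigma$ by $\tau$. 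Each such contribution is controlled by the sensitivity of the single-variable function $p\mapsto\log(1-p)$ (or its appropriate multivariate analog over $x\in L(v)$) composed with the change in $\mu_{v_i}^\sigma$; a Lipschitz/mean-value estimate then bounds it by $\contr(v_i)\cdot\err(\mu_{v_i}^\sigma,\mu_{v_i}^\tau)$, where the factor $\contr(v_i)=1/(|L(v_i)|-d_G(v_i)-1)$ emerges precisely because, after conditioning on the parent and the already-fixed colors among other neighbors of $v_i$ in $G$, the effective list size available at $v_i$ is at least $|L(v_i)|-d_G(v_i)+1$, keeping the derivative of $\log$ bounded away from the singularity; when $v_i$ is not permissive one uses the trivial factor $1$. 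Applying the induction hypothesis to each $T_i$ (the restriction of $S$ to $T_i$ is still a permissive cutset for $v_i$ and $\Delta$ within $T_i$) gives $\err(\mu_{v_i}^\sigma,\mu_{v_i}^\tau)\le\err_{T_i,\clist L,S}$, and summing yields $\err(\mu_v^\sigma,\mu_v^\tau)\le\sum_i\contr(v_i)\err_{T_i,\clist L,S}=\err_{T,\clist L,S}$.

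Two points need care. First, the passage from the original graph $G$ to the SAW tree $T$: I would invoke the standard fact (Weitz-style self-avoiding walk tree construction, adapted to colorings as in Gamarnik–Katz / Godsil) that the marginal of $c(v)$ in $(G,\clist L)$ under a boundary condition equals the marginal at the root of $(T,\clist L)$ under the induced boundary condition, where vertices in $T$ identified with already-fixed vertices of $G$ carry singleton lists and walks are truncated appropriately; this is what lets the whole argument run on the tree. Second, and this is the main obstacle, is making the per-child Lipschitz bound tight enough to produce exactly the factor $1/(|L(v_i)|-d_G(v_i)-1)$ rather than something weaker: this requires tracking that the number of colors genuinely ``free'' at $v_i$—after accounting for all neighbors of $v_i$ in $G$ whose colors are effectively pinned, including the parent—is at least $|L(v_i)|-d_G(v_i)+1\ge 3$ along the permissive part of the tree, so that $\log$ and its derivative stay in a controlled regime; handling the boundary between permissive and non-permissive regions, and the nodes adjacent to $\Delta$ (where the cutset definition's condition $\dist(u,\Delta)\ge 2$ is used to ensure the recursion has ``room'' before hitting the disagreement), is where the bookkeeping is most delicate. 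The weak-to-strong reduction claimed after the lemma (Theorem~\ref{thm-SAW-decay}) then follows by the usual trick of treating the fixed-but-agreeing part $\Lambda\setminus\Delta$ as a modification of the instance (shrinking lists), reducing to the weak spatial mixing statement of the lemma.
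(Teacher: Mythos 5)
Your proposal rests on two premises that do not hold for this setting, and together they miss the central new idea of the lemma.

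First, the plan hinges on ``the standard fact that the marginal of $c(v)$ in $(G,\clist L)$ under a boundary condition equals the marginal at the root of $(T,\clist L)$ under the induced boundary condition.'' That SAW-tree identity is a two-state phenomenon (Weitz's theorem for independent sets / Ising); it is \emph{not} true for $q$-colorings with $q>2$. The paper never reformulates the Gibbs measure on the tree. Instead it works entirely in the graph $G$: Lemma~\ref{lemma-partial-pinning-invariant} gives a telescoping identity for $\pr_{G,\clist L}(c(B)=\pi\mid\sigma)/\pr_{G,\clist L}(c(B)=\rho\mid\sigma)$ over the edge boundary $\delta B$, where the $i$th factor lives in a modified instance $(G_B,\clist L_{i,\pi,\rho})$, and Lemma~\ref{lemma-error-block-recursion} then converts this into a contraction of the error function. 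The SAW tree $T$ and the quantity $\err_{T,\clist L,S}$ are only a bookkeeping device used to organize the resulting sum over walks; they are not an alternative model of the distribution. Your proof proposal cannot be run as stated because the reduction it starts from is false.

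Second, and more importantly, you propose the standard per-vertex recursion (``when $v_i$ is not permissive one uses the trivial factor $1$''). This is precisely what fails for graphs of unbounded degree, which is the whole point of the lemma. When $|L(v_i)|\le d_G(v_i)+1$, the recursion of~\cite{gamarnik2012strong} does not produce a factor of $1$---it produces a factor of the form $1/(|L(v_i)|-d(v_i)-1)$ that is undefined or negative, and the marginal bounds used to derive the Lipschitz constant (Lemma~\ref{lemma-marginal-bounds}) no longer apply. The paper's fix is the block-wise decay: it takes $B$ to be the \emph{minimal permissive block} around the current vertex (absorbing all the ``frozen'' non-permissive vertices into $B$), applies the block recursion of Lemma~\ref{lemma-error-block-recursion} at the permissive boundary $\partial B$ (where every $v_i$ has $|L(v_i)|>d(v_i)+1$, so the factor $1/(|L(v_i)|-d(v_i)-1)$ is finite), and then charges the interior of the block only with factors of $\contr(u)=1$ inside $\err_{T,\clist L,S}$. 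Each step of the recursion also changes the list-coloring instance to $(G_B,\clist L_{i,\pi,\rho})$, and one must check that this never hurts permissiveness and never increases the multiplicative factors---again something Lemma~\ref{lemma-partial-pinning-invariant} provides and your proposal does not track. Without the block construction, your inductive step cannot be made to produce a finite bound on the graphs this lemma is about.
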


\newcommand{\ProofThmSAWDecay}{
The two feasible colorings $\sigma,\tau\in L(\Lambda)$ can be expressed as $\sigma=(\sigma',\eta)$ and $\tau=(\tau',\eta)$ such that $\sigma'$ and $\tau'$ are two feasible colorings of vertices in $\Delta$ and $\eta$ is a feasible coloring of vertices in $\Gamma=\Lambda\setminus\Delta$.
Let $(G_\Gamma,\clist{L}_\eta)$ be such a list-coloring instance where $G_\Gamma$ is obtained from $G$ by deleting all vertices in $\Gamma$ and incident edges, and $\clist{L}_\eta$ is a color list for vertices in $G_\Gamma$ obtained from $\clist{L}$ by deleting color $\eta(u)$ from the lists $L(w)$ for all neighbors $w$ of any $u\in \Gamma$. 
Clearly, $(G_\Gamma,\clist{L}_\eta)$ is the instance obtained from $(G,\clist{L})$ by conditioning on that $\Gamma$ is colored as $\eta$, thus $(G_\Gamma,\clist{L}_\eta)$ must be feasible since $\eta$ is feasible.
Let $T'=T_\SAW(G_\Gamma,v)$. Obviously $T'$ is a subtree of $T=T_\SAW(G,v)$. 
Let $S'\subseteq S$ be obtained from permissive cutset $S$ in $T$ for $v$ and $\Delta$ by excluding those vertices which are identified with a vertex in $\Gamma$ by $T_{\SAW}(G,v)$. 
It is easy to see that $S'$ is a cutset in $T'$ for $v$ and $\Delta$, and $S'$ is also permissive in $(G_\Gamma,\clist{L}_\eta)$ because the operation applied by $(G_\Gamma,\clist{L}_\eta)$ on the original instance $(G,\clist{L})$ never decreases the gap $|L(u)|-d(u)$.
Thus, by Lemma~\ref{lemma-SAW-decay}, we have 
\[
\err(\mu_1',\mu_2')\le \err_{T',\clist{L}_\eta,S'},
\]
where $\mu_1'=\mu_{G_\Gamma,\clist{L}_\eta,v}^{\sigma'}$ and $\mu_2'=\mu_{G_\Gamma,\clist{L}_\eta,v}^{\tau'}$ are the marginal distributions at $v$ in the new instance $(G_\Gamma,\clist{L}_\eta)$. 
It is easy to see that 
\begin{align*}
\pr_{G,\clist{L}}(c(v)=x\mid \sigma)
&=\pr_{G,\clist{L}}(c(v)=x\mid \sigma',\eta)
=\pr_{G_\Gamma,\clist{L}_\eta}(c(v)=x\mid \sigma')=\mu_1'(x),\\
\pr_{G,\clist{L}}(c(v)=x\mid \tau)
&=\pr_{G,\clist{L}}(c(v)=x\mid \tau',\eta)
=\pr_{G_\Gamma,\clist{L}_\eta}(c(v)=x\mid \tau')=\mu_2'(x),
\end{align*}
thus $\mu_v^\sigma=\mu_1'$ and $\mu_v^\tau=\mu_2'$ where $\mu_v^\sigma=\mu_{G,\clist{L},v}^\sigma$ and $\mu_v^\tau=\mu_{G,\clist{L},v}^\tau$ are marginal probabilities defined in the original instance $(G,\clist{L})$. 
Therefore, we have $\err(\mu_v^\sigma,\mu_v^\tau)=\err(\mu_1',\mu_2')\le \err_{T',\clist{L}_\eta,S'}$. It remains to show that $\err_{T',\clist{L}_\eta,S'}\le \err_{T,\clist{L},S}$ where $T=T_\SAW(G,v)$, which is quite easy to see, because every self-avoiding walk in $G_\Gamma$ ended in $S'$ must be a self-avoiding walk in $G$ ended in $S$  and also the operation applied by $(G_\Gamma,\clist{L}_\eta)$ on the original instance $(G,\clist{L})$ never decreases the gap $|L(u)|-d(u)$ thus never increases the value of $\contr(u)$ for any vertex $u$ in the SAW-tree.
}
\ifabs{
The implication from Lemma~\ref{lemma-SAW-decay} to Theorem~\ref{thm-SAW-decay} is quite standard, whose proof is in Appendix~\ref{appendix-SAW}. It now remains to prove Lemma~\ref{lemma-SAW-decay}.
}
{
\begin{proof}[Proof of Theorem~\ref{thm-SAW-decay} by Lemma~\ref{lemma-SAW-decay}]
\ProofThmSAWDecay
\end{proof}
}

\subsection{The block-wise correlation decay}
Now our task is to prove Lemma~\ref{lemma-SAW-decay}. This is done by establishing the decay of $\err(\mu_{B}^\sigma,\mu_{B}^\tau)$ along walks among blocks $B$ with the following good property.
\begin{definition}
Given a list-coloring instance $(G,\clist{L})$, a vertex set $B\subseteq V$ is a \concept{permissive block around $v$} in $(G,\clist{L})$ if $v\in B$ and $|L(u)|>d_G(u)+1$ for every vertex $u$ in the vertex boundary $\partial B$.
\end{definition}

For permissive blocks $B$, a coloring of $B$ is globally feasible if and only if it is locally feasible (i.e.~proper on $B$).
\begin{lemma}\label{lemma-local-feasible}
Let $\Delta\subset V$ and $B\subset V$ a permissive block such that $\mathrm{dist}(B,\Delta)\ge 2$. Then for any feasible coloring $\sigma\in L(\Delta)$, for any coloring $\pi\in L(B)$, it holds that $\mu_B^\sigma(\pi)>0$ if and only if $\pi$ is proper on $B$. 
\end{lemma}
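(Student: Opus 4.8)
The plan is to prove both directions separately. The forward direction is trivial: if $\mu_B^\sigma(\pi)>0$ then $\pi$ is extendable to a proper coloring of all of $G$ consistent with $\sigma$, hence in particular $\pi$ restricted to $B$ is proper on the induced subgraph $G[B]$, so $\pi$ is proper on $B$. The content is in the reverse direction, which I would establish by a greedy completion argument in the spirit of Proposition~\ref{proposition-permissive-boundary}.

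\medskip

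For the reverse direction, suppose $\pi\in L(B)$ is proper on $B$. I want to exhibit a proper coloring $c$ of the whole instance $(G,\clist{L})$ with $c(B)=\pi$ and $c(\Delta)=\sigma$; then $\mu_B^\sigma(\pi)>0$ follows since $\sigma$ is feasible (and conditioning on $\sigma$ is well-defined). First I would start from a proper coloring $c_0$ of $(G,\clist{L})$ that agrees with $\sigma$ on $\Delta$, which exists because $\sigma$ is feasible. The idea is to surgically modify $c_0$ to agree with $\pi$ on $B$ without disturbing $\Delta$. Concretely, let $c_1$ agree with $\pi$ on $B$, with $c_0$ on $\Delta$ and on all vertices outside $B\cup\partial B$, and leave the colors of $\partial B$ temporarily undefined (to be fixed greedily). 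Since $\mathrm{dist}(B,\Delta)\ge 2$, the boundary $\partial B$ is disjoint from $\Delta$, so this reassignment never touches $\Delta$; and any vertex in $\partial B$ is at distance $\ge 1$ from $\Delta$, so its neighbors that we have already committed to — namely those in $B$ (colored by $\pi$) and those outside $B\cup\partial B$ (colored by $c_0$) — impose at most $d_G(u)$ forbidden colors on $u\in\partial B$. Because $B$ is a permissive block, $|L(u)|>d_G(u)+1$ for every $u\in\partial B$, so each such $u$ has an available color, and we may color the vertices of $\partial B$ one at a time (in any order), at each step still having $|L(u)|-d_G(u)-1\ge 1>0$ choices even after accounting for already-colored neighbors in $\partial B$ itself. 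This yields a proper coloring of the whole graph agreeing with $\pi$ on $B$ and with $\sigma$ on $\Delta$.

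\medskip

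The one point requiring care — and the main (mild) obstacle — is making sure the greedy recoloring of $\partial B$ does not create a conflict along edges \emph{within} $G[B\cup\partial B]$ or with the untouched part: I need that every vertex $u\in\partial B$ has all of its neighbors lying in $B$, in $\partial B$, or outside $B\cup\partial B$ (which holds by definition of $\partial B$), that $\pi$ is already proper on $B$ (hypothesis), that $c_0$ is already proper outside, and that no edge joins $B$ directly to the recolored-from-$c_0$ outside region (again by definition of $\partial B$, every such edge passes through $\partial B$). Given these structural facts, the degree bound $|L(u)|>d_G(u)+1$ on $\partial B$ leaves a strict surplus, so the greedy step never fails. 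I would also remark that $\mathrm{dist}(B,\Delta)\ge2$ is exactly what guarantees $\Delta\cap(B\cup\partial B)=\varnothing$, so the modification is invisible to $\sigma$; this is the only place that hypothesis is used.
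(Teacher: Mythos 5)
Your proposal is correct and takes essentially the same route as the paper: the paper reduces the lemma to Proposition~\ref{proposition-permissive-boundary} with the cut $S=\partial B$, and the proof of that proposition is exactly the greedy-completion argument you spell out — keep $\pi$ on $B$, keep a proper coloring extending $\sigma$ on $V\setminus(B\cup\partial B)$, and fill in $\partial B$ one vertex at a time using the surplus $|L(u)|>d_G(u)+1$. Your version just makes the disjointness of $\partial B$ from $\Delta$ and the degree count explicit rather than citing the proposition.
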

\begin{proof}
Let $S=\partial B$. Note that with $\mathrm{dist}(B,\Delta)\ge 2$ and $S$ must be a vertex cut separating $B$ and $\Delta$. Then the lemma can be proved by the same argument as in the proof of Proposition~\ref{proposition-permissive-boundary}.
\end{proof}


\noindent\textbf{Notations.} We now define some notations which are used throughout this section. 
Let $B\subset V$ be a permissive block in a feasible list-coloring instance $(G,\clist{L})$.
Let $\delta B=\{uw\in E\mid u\in B\mbox{ and }w\not\in B\}$ be the edge boundary of $B$. 
We enumerate these boundary edges as $\delta B=\{e_1,e_2,\ldots,e_m\}$.
For $i=1,2,\ldots,m$, we assume $e_i=u_iv_i$ where $u_i\in B$ and $v_i\not\in B$. Note that in this notation more than one $u_i$ or $v_i$ may refer to the same vertex in $G$. Let $G_B=G[V\setminus B]$ be the subgraph of $G$ induced by vertex set $V\setminus B$.  
For a coloring $\pi\in L(B)$ and $1\le i\le m$, we denote $\pi_i=\pi(u_i)$. 
For $1\le i\le m$ and $\pi,\rho\in L(B)$, let $\clist{L}_{i,j,\pi,\rho}=(L'(v): v\in V\setminus B)$ be 
obtained from $\clist{L}$ by removing the color $\pi_k$ from the list $L(v_k)$ for all $k<i$ and removing the color $\rho_k$ from the list $L(v_k)$ for all $k>i$ (if any of these lists do not contain the respective color then no change is made to them).

\ifabs{
With this notation, the following lemma generalizes a recursion introduced in~\cite{gamarnik2012strong} for bounded-degree graphs to general graphs. The proof is in Appendix~\ref{appendix-SAW}.
}
{With this notation, the following lemma generalizes a recursion introduced in~\cite{gamarnik2012strong} for bounded-degree graphs to general graphs by using permissive blocks.
}
\begin{lemma}\label{lemma-partial-pinning-invariant}
Let $(G,\clist{L})$ be a feasible list-coloring instance, $B\subset V$ a permissive block with edge boundary $\delta B=\{e_1,e_2,\ldots,e_m\}$ where $e_i=u_iv_i$ for each $i=1,2,\ldots,m$,
and $\pi,\rho\in L^*(B)$ any two proper colorings of $B$.
For every $1\le i\le m$,
\begin{itemize}
\item if a vertex $u\not\in B$ is permissive in $(G,\clist{L})$, then it is permissive in the new instance $(G_B,\clist{L}_{i,\pi,\rho})$;
\item the new instance $(G_B,\clist{L}_{i,\pi,\rho})$ is feasible.
\end{itemize}
For any feasible coloring $\sigma\in L(\Delta)$ of a vertex set $\Delta\subset V$ with $\dist(B,\Delta)\ge 2$, we have
\begin{align*}
\frac{\pr_{G,\clist{L}}(c(B)=\pi\mid \sigma)}{\pr_{G,\clist{L}}(c(B)=\rho\mid \sigma)}
&=
\prod_{i=1}^m
\frac{1-\pr_{G_B,\clist{L}_{i,\pi,\rho}}(c(v_i)=\pi_i\mid \sigma)}{1-\pr_{G_B,\clist{L}_{i,\pi,\rho}}(c(v_i)= \rho_i\mid \sigma)}.
\end{align*}
\end{lemma}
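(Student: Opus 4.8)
The plan is to prove Lemma~\ref{lemma-partial-pinning-invariant} by a telescoping argument over the boundary edges $e_1,\dots,e_m$, interpolating between the coloring $\pi$ and the coloring $\rho$ one boundary vertex at a time. First I would dispose of the two bulleted structural claims. For a permissive block, the vertex boundary $\partial B$ satisfies $|L(u)|>d_G(u)+1$; deleting the vertices of $B$ from $G$ and removing at most the colors $\pi_k$ or $\rho_k$ from the list of each $v_k$ can only decrease the degree of a surviving vertex $u$ and can decrease $|L(u)|$ by at most the number of deleted neighbors, so the gap $|L(u)|-d(u)$ never decreases; hence permissiveness of any $u\notin B$ is preserved in $(G_B,\clist{L}_{i,\pi,\rho})$. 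Feasibility of $(G_B,\clist{L}_{i,\pi,\rho})$ follows because $\pi$ (resp. $\rho$) is a proper coloring of $B$, so restricting the global instance by pinning $B$ partially to $\pi$ on $\{u_k:k<i\}$-sides and $\rho$ on $\{u_k:k>i\}$-sides is consistent, and one completes it greedily using the permissive boundary exactly as in the proof of Proposition~\ref{proposition-permissive-boundary}; more simply, $\pi$ itself extends to a proper coloring of $G$ by Lemma~\ref{lemma-local-feasible} (using $\dist(B,\Delta)\ge 2$ is not even needed here since we only need feasibility, not conditioning), and this extension restricted to $V\setminus B$ is a proper coloring of $(G_B,\clist{L}_{i,\pi,\rho})$.

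For the main identity, I would write the ratio as a telescoping product
\[
\frac{\pr_{G,\clist{L}}(c(B)=\pi\mid\sigma)}{\pr_{G,\clist{L}}(c(B)=\rho\mid\sigma)}
=\prod_{i=1}^m \frac{\pr_{G,\clist{L}}(c(u_k)=\pi_k,\,k\le i;\ c(u_k)=\rho_k,\,k>i\mid\sigma)}{\pr_{G,\clist{L}}(c(u_k)=\pi_k,\,k<i;\ c(u_k)=\rho_k,\,k\ge i\mid\sigma)},
\]
being careful about the case where distinct indices $u_k$ name the same vertex of $G$ (then the constraint is simply that $\pi_k=\rho_k$ there, so the corresponding factor is $1$ and causes no inconsistency since $\pi,\rho$ are colorings of $B$). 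Each factor of this product compares two events that differ only in whether the single block-vertex $u_i$ is forced to color $\pi_i$ or to color $\rho_i$, with all other boundary-vertex colors fixed. Conditioning on the pinned values of the block vertices, the graph decouples: the colors inside $B$ are determined/irrelevant and the relevant randomness lives on $V\setminus B$ with the lists reduced to exactly $\clist{L}_{i,\pi,\rho}$ — the colors $\pi_k$ ($k<i$) and $\rho_k$ ($k>i$) are forbidden at the respective $v_k$'s, and color $\pi_i$ or $\rho_i$ is forbidden at $v_i$. Hence the $i$-th factor equals $\Pr_{G_B,\clist{L}_{i,\pi,\rho}}(c(v_i)\ne\pi_i\mid\sigma)/\Pr_{G_B,\clist{L}_{i,\pi,\rho}}(c(v_i)\ne\rho_i\mid\sigma)$, which is exactly $(1-\pr_{G_B,\clist{L}_{i,\pi,\rho}}(c(v_i)=\pi_i\mid\sigma))/(1-\pr_{G_B,\clist{L}_{i,\pi,\rho}}(c(v_i)=\rho_i\mid\sigma))$, as claimed.

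The step I expect to be the main obstacle — or at least the one requiring the most care — is justifying the decoupling cleanly: namely, that conditioned on the full coloring of the block vertices $u_1,\dots,u_m$ being some fixed proper coloring of $B$, the marginal law on $V\setminus B$ is precisely the Gibbs measure of $(G_B,\clist{L}')$ for the appropriately color-reduced lists $\clist{L}'$, and that this remains true after further conditioning on $\sigma$ at distance $\ge 2$ from $B$. This is where the permissive-block and $\dist(B,\Delta)\ge 2$ hypotheses enter: the distance condition guarantees that the pinning $\sigma$ does not interact with any $v_i$ directly in a way that would break the clean list description (equivalently, that $\partial B$ is a genuine separating set whose feasibility constraints are purely local), and Lemma~\ref{lemma-local-feasible} guarantees every proper coloring of $B$ actually occurs with positive probability so all conditional probabilities are well-defined. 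Once the decoupling is set up correctly, the telescoping is purely formal; the only remaining bookkeeping is the repeated-vertex subtlety and checking that in each factor the two conditioning events both have positive probability, which again follows from Lemma~\ref{lemma-local-feasible} together with the preservation of permissiveness established in the first part.
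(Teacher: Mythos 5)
Your overall plan (telescope over the boundary edges) matches the paper's, but the specific telescoping you set up is over positive pinnings of the block vertices $u_k$ in $(G,\clist{L})$, whereas the paper's proof first reduces the ratio to probabilities in $(G_B,\clist{L})$ and then telescopes over \emph{negative} constraints at the $v_k$'s. This difference matters, and your version has genuine gaps.

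First, your opening identity
\[
\frac{\pr_{G,\clist{L}}(c(B)=\pi\mid\sigma)}{\pr_{G,\clist{L}}(c(B)=\rho\mid\sigma)}
=\prod_{i=1}^m \frac{\pr_{G,\clist{L}}(c(u_k)=\pi_k,\,k\le i;\ c(u_k)=\rho_k,\,k>i\mid\sigma)}{\pr_{G,\clist{L}}(c(u_k)=\pi_k,\,k<i;\ c(u_k)=\rho_k,\,k\ge i\mid\sigma)}
\]
is already false in general. The right side telescopes to $\pr(c(\{u_k\})=\pi|_{\{u_k\}}\mid\sigma)/\pr(c(\{u_k\})=\rho|_{\{u_k\}}\mid\sigma)$, which differs from the left side by the ratio of the interior partition functions $Z^{\mathrm{int}}_{\pi}/Z^{\mathrm{int}}_{\rho}$, i.e.\ the number of proper extensions of $\pi|_{\{u_k\}}$ (resp.\ $\rho|_{\{u_k\}}$) to $G[B]$. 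Unless the block has no interior, this ratio is not $1$. (Your per-factor identification later drops the same interior ratio, so the two errors happen to cancel, but that is not a proof.)

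Second, the intermediate events in your product can be empty. Your remark about repeated $u_k$'s is backwards: if $u_k=u_{k'}$ then what is automatic is $\pi_k=\pi_{k'}$ and $\rho_k=\rho_{k'}$, not $\pi_k=\rho_k$. If $\pi(u_k)\ne\rho(u_k)$ and $k\le i<k'$, your interpolating event forces $c(u_k)=\pi_k$ and simultaneously $c(u_{k'})=\rho_{k'}=\rho_k\ne\pi_k$, which is impossible. Even without repeats, a partial assignment $\pi$ on some $u_k$'s and $\rho$ on the rest need not be a proper coloring of $G[B]$ (two adjacent $u$'s may get the same color under the mix), again producing an empty event. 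The paper sidesteps both problems by moving immediately to $G_B$ and telescoping over nested \emph{color-avoidance} constraints $\{c(v_k)\ne\cdot\}$, which are always simultaneously satisfiable with positive probability because $\partial B$ is permissive (each $v_i$ still has $|L'(v_i)|>d'(v_i)+1$ in the reduced instance, so $\pr(c(v_i)=\cdot)\le 1/2$).

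Finally, the proposed shortcut for feasibility (``more simply, $\pi$ extends to a proper coloring of $G$, and its restriction to $V\setminus B$ is proper for $(G_B,\clist{L}_{i,\pi,\rho})$'') does not go through: the extension avoids $\pi_k$ at $v_k$ for every $k$, but $\clist{L}_{i,\pi,\rho}$ also removes $\rho_k$ at $v_k$ for $k>i$, and nothing guarantees the extension avoids $\rho_k$ there. Your other suggestion, a greedy completion of a proper coloring of $G[V\setminus(B\cup\partial B)]$ using permissiveness of $\partial B$, is the correct argument and is what the paper uses.
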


\newcommand{\ProofPinningInvariant}{
When modifying $(G,\clist{L})$ to $(G_B,\clist{L}_{i,\pi,\rho})$, for any vertex $u\not\in B$, every time a color is removed from $L(u)$, at least one of the neighbors of $u$ is also deleted, so $|L(u)|-d(u)$ never decreases, which means that any vertex is permissive in $(G_B,\clist{L}_{i,\pi,\rho})$ if it is permissive in $(G,\clist{L})$. 

We next show that $(G_B,\clist{L}_{i,\pi,\rho})$ is feasible.
Let $R=V\setminus(B\cup\partial B)$ and $\eta$ a proper coloring of subgraph $G[R]$ induced by $R$ (such a coloring must exist or otherwise $(G,\clist{L})$ is not feasible). Recall that every vertex $u\in \partial B$ must remain to have $|L(u)|>d(u)+1$ in $(G_B,\clist{L}_{i,\pi,\rho})$ since $B$ is a permissive block in $(G,\clist{L})$ and $(G_B,\clist{L}_{i,\pi,\rho})$ never reduces the gap $|L(u)|-d(u)$, which means no matter what $\eta$ is, we can always properly color $\partial B$ in a greedy fashion without conflicting with $\eta$, giving us a proper coloring of $(G_B,\clist{L}_{i,\pi,\rho})$.

We then prove the recursion. 
Due to Lemma~\ref{lemma-local-feasible}, both $\pr_{G,\clist{L}}(c(B)=\pi\mid \sigma)$ and $\pr_{G,\clist{L}}(c(B)=\rho\mid \sigma)$ are positive since $\pi$ and $\rho$ are proper on $B$. 
For any $0\le i\le m$, observe that 
\begin{align*}
\pr_{G_B,\clist{L}}(\forall k\le i, c(v_k)\neq \pi_k, \forall k>i, c(v_k)\neq \rho_k\mid \sigma)
&=
\pr_{G_B,\clist{L}_{i,\pi,\rho}}(c(v_i)\neq \pi_i\mid \sigma)\\
&=
1-\pr_{G_B,\clist{L}_{i,\pi,\rho}}(c(v_i)=\pi_i\mid \sigma).
\end{align*}
As argued above we have $|L(v_i)|>d(v_i)+1$ in $(G_B,\clist{L}_{i,\pi,\rho})$ because $v_i$ is a boundary vertex of a permissive block $B$ in $(G,\clist{L})$ and $(G_B,\clist{L}_{i,\pi,\rho})$ never reduces the gap $|L(v_i)|-d(v_i)$.
This implies that the probability $\pr_{G_B,\clist{L}_{i,\pi,\rho}}(c(v_i)=\pi_i\mid \sigma)\le\frac{1}{2}$ because conditioning on any particular coloring of neighbors of $v_i$ there are at least two colors in its list not used by its neighbors. Therefore, we have $\pr_{G_B,\clist{L}}(\forall k\le i, c(v_k)\neq \pi_k, \forall k>i, c(v_k)\neq \rho_k\mid \sigma)>0$, and the following telescopic product is safe to apply:
\begin{align*}
\frac{\pr_{G,\clist{L}}(c(B)=\pi\mid \sigma)}{\pr_{G,\clist{L}}(c(B)=\rho\mid \sigma)}
&=
\frac{\pr_{G_B,\clist{L}}(\forall 1\le i\le m, c(v_i)\neq \pi_i\mid \sigma)}{\pr_{G_B,\clist{L}}(\forall 1\le i\le m, c(v_i)\neq \rho_i\mid \sigma)}\\
&=
\prod_{i=1}^m
\frac{\pr_{G_B,\clist{L}}(\forall k\le i, c(v_k)\neq \pi_k, \forall k>i, c(v_k)\neq \rho_k\mid \sigma)}{\pr_{G_B,\clist{L}}(\forall k< i, c(v_k)\neq \pi_k, \forall k\ge i, c(v_k)\neq \rho_k\mid \sigma)}\\
&=
\prod_{i=1}^m
\frac{\pr_{G_B,\clist{L}_{i,\pi,\rho}}(c(v_i)\neq \pi_i\mid \sigma)}{\pr_{G_B,\clist{L}_{i,\pi,\rho}}(c(v_i)\neq \rho_i\mid \sigma)}\\
&=
\prod_{i=1}^m
\frac{1-\pr_{G_B,\clist{L}_{i,\pi,\rho}}(c(v_i)=\pi_i\mid \sigma)}{1-\pr_{G_B,\clist{L}_{i,\pi,\rho}}(c(v_i)= \rho_i\mid \sigma)}.
\end{align*}
}
\ifabs{}{
\begin{proof}
\ProofPinningInvariant
\end{proof}
}

\ifabs{The following marginal bounds are standard and are proved in Appendix~\ref{appendix-SAW}.}
{
The following bounds for marginal probabilities are quite standard.
}
\begin{lemma}\label{lemma-marginal-bounds}
Given a feasible list-coloring instance $(G,\clist{L})$,
if vertex $v$ has $|L(v)|>d(v)+1$ and $v\not\in\Delta$, then for any feasible coloring $\sigma\in L(\Delta)$ and any $x\in L(v)$, we have
\[
\pr_{G,\clist{L}}(c(v)=x\mid \sigma)\le\frac{1}{|L(v)|-d(v)}.
\]
If vertex $v$ is permissive in $(G,\clist{L})$ and $\dist(v,\Delta)\ge 2$, then for any feasible coloring $\sigma\in L(\Delta)$ and any $x\in L(v)$, we have
\[
\pr_{G,\clist{L}}(c(v)=x\mid \sigma)\ge \frac{1}{|L(v)|2^{d(v)}}.
\]
\end{lemma}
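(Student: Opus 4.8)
The plan is to establish the upper bound first by a one-step conditioning argument, and then to obtain the lower bound essentially for free by iterating the upper bound over the neighbourhood $N(v)$ of $v$.

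For the upper bound, condition on the colours of all vertices other than $v$. Since $\sigma$ is feasible and $v\notin\Delta$, the set of proper colourings agreeing with $\sigma$ is nonempty; and for any proper colouring $\zeta$ of $V\setminus\{v\}$ in the support of this conditional distribution, the law of $c(v)$ given $c(V\setminus\{v\})=\zeta$ is uniform over $L(v)\setminus\zeta(N(v))$, because the colour of $v$ places no constraint on any vertex outside $N(v)$, so each admissible colour of $v$ extends to the same number of proper colourings. As $v$ has at most $d(v)$ neighbours, $|L(v)\setminus\zeta(N(v))|\ge|L(v)|-d(v)>0$, hence $\pr_{G,\clist{L}}(c(v)=x\mid\sigma,c(V\setminus\{v\})=\zeta)\le\frac{1}{|L(v)|-d(v)}$; averaging over $\zeta$ gives the first inequality.

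For the lower bound, the same conditioning yields $\mu^\sigma_v(x)\ge\frac{1}{|L(v)|}\cdot\pr_{G,\clist{L}}(x\notin c(N(v))\mid\sigma)$, since on the event that $x$ is still available at $v$ the colour $x$ is picked with probability at least $1/|L(v)|$. It then suffices to show $\pr_{G,\clist{L}}(x\notin c(N(v))\mid\sigma)\ge 2^{-d(v)}$. Enumerating $N(v)=\{u_1,\dots,u_k\}$ with $k=d(v)$, write this probability as the telescoping product $\prod_{i=1}^{k}\pr_{G,\clist{L}}(c(u_i)\ne x\mid c(u_1)\ne x,\dots,c(u_{i-1})\ne x,\sigma)$. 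Conditioning on $\{c(u_j)\ne x:j<i\}$ replaces the Gibbs measure by that of the instance $(G,\clist{L}^{(i)})$ obtained by deleting colour $x$ from $L(u_1),\dots,L(u_{i-1})$; in $(G,\clist{L}^{(i)})$ the vertex $u_i$ keeps its list $L(u_i)$ (the $u_j$ are distinct from $u_i$), so $|L^{(i)}(u_i)|=|L(u_i)|\ge d(u_i)+2$, and $u_i\notin\Delta$ because $\dist(v,\Delta)\ge 2$. Applying the upper bound just proved, now in $(G,\clist{L}^{(i)})$, gives $\pr(c(u_i)=x\mid\sigma)\le\frac{1}{|L(u_i)|-d(u_i)}\le\frac12$, so each factor is at least $1/2$; an induction along the way (each factor being positive) also shows every conditioning event has positive probability, so the factorization is legitimate. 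Multiplying the factors gives $2^{-d(v)}$, and combining with the previous inequality gives $\mu^\sigma_v(x)\ge\frac{1}{|L(v)|2^{d(v)}}$.

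The step I would be most careful about is not any calculation but the measure-theoretic bookkeeping: checking that every event one conditions on has positive probability (so that the conditioned instance is feasible and the upper bound applies) and that the conditional law of a single vertex given all others is genuinely uniform on the locally available colours. This is routine for the uniform measure on proper list-colourings, and it is exactly where the hypotheses are used — the distance-$2$ condition keeps $v$ and its neighbours outside $\Delta$, and the list-size conditions ($|L(u)|\ge d(u)+2$ for $v$ and each of its neighbours) prevent any relevant vertex from being frozen, in particular forcing $\frac{1}{|L(u_i)|-d(u_i)}\le\frac12$. Beyond that, the only conceptual point is the observation that the lower bound is simply the upper bound iterated over $N(v)$.
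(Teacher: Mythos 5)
Your proof is correct. The upper bound argument matches the paper's (condition on the colors of $N(v)$; the conditional law of $c(v)$ is uniform on the locally available colors, of which there are at least $|L(v)|-d(v)$). For the lower bound, however, you take a genuinely different route. The paper invokes its ratio recursion from Lemma~\ref{lemma-partial-pinning-invariant} with $B=\{v\}$ to write, for any $y\in L(v)$,
\[
\frac{\pr(c(v)=y\mid\sigma)}{\pr(c(v)=x\mid\sigma)}
=\prod_{i=1}^{d(v)}\frac{1-\pr_{G_B,\clist{L}_{i,y,x}}(c(v_i)=y\mid\sigma)}{1-\pr_{G_B,\clist{L}_{i,y,x}}(c(v_i)=x\mid\sigma)}\le 2^{d(v)},
\]
and then sums this ratio bound over $y\in L(v)$ to get $1/\pr(c(v)=x\mid\sigma)\le|L(v)|2^{d(v)}$. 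You instead lower-bound $\mu_v^\sigma(x)$ directly by $\frac{1}{|L(v)|}\pr(x\notin c(N(v))\mid\sigma)$, and then show $\pr(x\notin c(N(v))\mid\sigma)\ge 2^{-d(v)}$ by a telescoping product over the neighbors, bounding each factor below by $1/2$ via the upper bound applied in the instances $(G,\clist{L}^{(i)})$ (deleting color $x$ from the lists of $u_1,\dots,u_{i-1}$). Both are valid; the paper's version is integrated with its block-recursion machinery, whereas yours is more self-contained, avoiding Lemma~\ref{lemma-partial-pinning-invariant} entirely. One small merit of your formulation is that it keeps the graph fixed and only trims lists, so the bookkeeping of ``degrees and list sizes never become worse'' is immediate; the paper's version also removes $v$ from the graph, which requires the additional (but easy) observation that the gap $|L(u)|-d(u)$ never shrinks. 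Incidentally, note the paper has a typo in its proof of the first inequality (it writes $\ge$ where it clearly means $\le$); your statement is correct.
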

\newcommand{\ProofMarginBounds}{
For the first inequality, conditioning on any coloring of neighbors of $v$, there are at least $|L(v)|-d(v)$ colors in $L(v)$ not used by its neighbors, thus $\pr_{G,\clist{L}}(c(v)=x)\ge\frac{1}{|L(v)|-d(v)}$.

For the second inequality, note that for a permissive $v$ with $\dist(v,\Delta)\ge 2$, $B=\{v\}$ is a permissive block with $\dist(B,\Delta)\ge 2$. Applying the recursion in Lemma~\ref{lemma-partial-pinning-invariant}, we have
\begin{align*}
\frac{\pr_{G,\clist{L}}(c(v)=y\mid \sigma)}{\pr_{G,\clist{L}}(c(v)=x\mid \sigma)}
&=
\prod_{i=1}^{d(v)}
\frac{1-\pr_{G_B,\clist{L}_{i,y,x}}(c(v_i)=y\mid \sigma)}{1-\pr_{G_v,\clist{L}_{i,y,x}}(c(v_i)= x\mid \sigma)},
\end{align*}
for any $x,y\in L(v)$, where $v_1,v_2,\ldots, v_{d(v)}$ are the neighbors of $v$, and each $v_i$ remains to have $|L(v_i)|>d(v_i)+1$ in each new list-coloring instance $(G_B,\clist{L}_{i,y,x})$. Also by Lemma~\ref{lemma-partial-pinning-invariant}, all new instances $(G_B,\clist{L}_{i,y,x})$ are feasible. Thus by the first inequality, we have $\pr_{G_v,\clist{L}_{i,y,x}}(c(v_i)= x\mid \sigma)\le \frac{1}{|L'(v_i)|-d'(v_i)}\le \frac{1}{2}$. Therefore, it holds that
\[
\frac{\pr_{G,\clist{L}}(c(v)=y\mid \sigma)}{\pr_{G,\clist{L}}(c(v)=x\mid \sigma)}\le \prod_{i=1}^{d(v)}\frac{1}{1-\frac{1}{2}}\le 2^{d(v)}.
\]
Summing this over all $y\in L(v)$, we have 
\[
\frac{1}{\pr_{G,\clist{L}}(c(v)=x\mid \sigma)}
=
\sum_{y\in L(v)}\frac{\pr_{G,\clist{L}}(c(v)=y\mid \sigma)}{\pr_{G,\clist{L}}(c(v)=x\mid \sigma)}
\le
|L(v)|2^{d(v)},
\]
which implies $\pr_{G,\clist{L}}(c(v)=x\mid \sigma)\ge \frac{1}{|L(v)|2^{d(v)}}$.
}
\ifabs{}
{
\begin{proof}
\ProofMarginBounds
\end{proof}
}

\ifabs{
The recursion in Lemma~\ref{lemma-partial-pinning-invariant} can imply the following bound for the block-wise  decay of error function $\err(\mu_v^\sigma,\mu_v^\tau)$. The proof generalizes the analysis of the point-wise decay in degree-bounded graphs in~\cite{gamarnik2012strong}, and is put to Appendix~\ref{appendix-SAW}.
}
{
The recursion in Lemma~\ref{lemma-partial-pinning-invariant} can imply the following bound for the block-wise  decay of error function $\err(\mu_v^\sigma,\mu_v^\tau)$, which generalizes the analysis in~\cite{gamarnik2012strong} of the point-wise decay in degree-bounded graphs.
}
\begin{lemma}\label{lemma-error-block-recursion}
Let $(G,\clist{L})$ be a feasible list-coloring instance, $v\in V$ and $B\subset V$ a permissive block around $v$ with edge boundary $\delta B=\{e_1,e_2,\ldots,e_m\}$ where $e_i=u_iv_i$ for each $i=1,2,\ldots,m$. Let $\Delta\subset V$ be a vertex set with $\dist(B,\Delta)\ge 2$, and $\sigma,\tau\in L(\Delta)$ any two feasible colorings of $\Delta$.
Assume $\pi,\rho\in L^*(B)$ to be two proper colorings of $B$ achieving the maximum in the error function:
\[
\err(\mu_{B}^\sigma,\mu_{B}^\tau)=\max_{\pi,\rho\in L^*(B)}\left(\log\left(\frac{\mu_{B}^\sigma(\pi)}{\mu_{B}^\tau(\pi)}\right)-\log\left(\frac{\mu_{B}^\sigma(\rho)}{\mu_{B}^\tau(\rho)}\right)\right).
\]
It holds that
\[
\err(\mu_v^\sigma,\mu_v^\tau)
\le
\sum_{i=1}^m\frac{1}{|L(v_i)|-d(v_i)-1}\cdot\err(\mu_i^\sigma,\mu_i^\tau),
\]
where $\mu_i^\sigma=\mu_{G_B,\clist{L}_{i,\pi,\rho},v_i}^\sigma$ and $\mu_i^\tau=\mu_{G_B,\clist{L}_{i,\pi,\rho},v_i}^\tau$ are the respective marginal distributions of coloring of vertex $v_i$ conditioning on $\sigma$ and $\tau$ in the new list-coloring instance $(G_B,\clist{L}_{i,\pi,\rho})$.
\end{lemma}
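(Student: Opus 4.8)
The plan is to derive the claimed inequality directly from the telescoping recursion of Lemma~\ref{lemma-partial-pinning-invariant}, following the error-function bookkeeping of~\cite{gamarnik2012strong} but now with blocks in place of single vertices. First I would fix $\pi,\rho\in L^*(B)$ achieving the maximum in $\err(\mu_B^\sigma,\mu_B^\tau)$; such proper colorings exist and have positive marginal probability under both $\sigma$ and $\tau$ by Lemma~\ref{lemma-local-feasible}, so all the logarithms below are finite. The key point connecting $\err(\mu_v^\sigma,\mu_v^\tau)$ to $\err(\mu_B^\sigma,\mu_B^\tau)$ is that, since $v\in B$, for any two colors $x,y\in L(v)$ one can choose proper extensions $\pi,\rho$ of $B$ with $\pi(v)=x$, $\rho(v)=y$, and then
\[
\log\frac{\mu_v^\sigma(x)}{\mu_v^\tau(x)}-\log\frac{\mu_v^\sigma(y)}{\mu_v^\tau(y)}
\]
is dominated by the corresponding block quantity; more precisely the maximum over $x,y$ of the left-hand side is at most $\err(\mu_B^\sigma,\mu_B^\tau)$, because marginalizing onto $v$ can only shrink the log-likelihood-ratio spread (a convexity/averaging argument: $\mu_v^\sigma(x)$ is a sum over proper $\pi$ with $\pi(v)=x$, and the ratio $\mu_v^\sigma(x)/\mu_v^\tau(x)$ lies between the extreme ratios $\mu_B^\sigma(\pi)/\mu_B^\tau(\pi)$ over those $\pi$). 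So it suffices to bound $\err(\mu_B^\sigma,\mu_B^\tau)$ by the asserted sum.

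Next I would apply Lemma~\ref{lemma-partial-pinning-invariant} to both $\sigma$ and $\tau$ with the \emph{same} reference colorings $\pi,\rho$, obtaining
\[
\log\frac{\mu_B^\sigma(\pi)}{\mu_B^\sigma(\rho)}-\log\frac{\mu_B^\tau(\pi)}{\mu_B^\tau(\rho)}
=\sum_{i=1}^m\left[\log\frac{1-\mu_i^\sigma(\pi_i)}{1-\mu_i^\sigma(\rho_i)}-\log\frac{1-\mu_i^\tau(\pi_i)}{1-\mu_i^\tau(\rho_i)}\right],
\]
where $\mu_i^\sigma,\mu_i^\tau$ are the marginals at $v_i$ in $(G_B,\clist{L}_{i,\pi,\rho})$; the left-hand side equals $\err(\mu_B^\sigma,\mu_B^\tau)$ by the choice of $\pi,\rho$. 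For each term $i$ I would bound the bracketed difference by $\frac{1}{|L(v_i)|-d(v_i)-1}\,\err(\mu_i^\sigma,\mu_i^\tau)$. This is a one-variable estimate: write $f(a,b)=\log\frac{1-a}{1-b}$; the bracket is $f(\mu_i^\sigma(\pi_i),\mu_i^\sigma(\rho_i))-f(\mu_i^\tau(\pi_i),\mu_i^\tau(\rho_i))$, and one controls it via the partial derivatives of $\log(1-a)$, namely $\big|\frac{d}{da}\log(1-a)\big|=\frac{1}{1-a}$, evaluated at marginals that are at most $\frac{1}{|L(v_i)|-d(v_i)}$ by the first bound of Lemma~\ref{lemma-marginal-bounds} (note $v_i\notin\Delta$ since $\dist(B,\Delta)\ge 2$, and $|L(v_i)|>d(v_i)+1$ since $B$ is a permissive block). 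A mean-value / telescoping step then converts $|\log\mu_i^\sigma(\pi_i)-\log\mu_i^\tau(\pi_i)|$-type differences into $\err(\mu_i^\sigma,\mu_i^\tau)$ with the stated multiplicative factor $\frac{1}{1-1/(|L(v_i)|-d(v_i))}\cdot\frac{1}{|L(v_i)|-d(v_i)}=\frac{1}{|L(v_i)|-d(v_i)-1}$. Summing over $i=1,\dots,m$ gives the lemma.

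The step I expect to be the main obstacle is the per-edge contraction estimate: carefully showing that the crude one-coordinate Lipschitz bound for $a\mapsto\log(1-a)$, combined with the fact that $\err(\mu_i^\sigma,\mu_i^\tau)$ only controls \emph{differences of log-ratios} across colors (not the log-ratios themselves), yields exactly the factor $\frac{1}{|L(v_i)|-d(v_i)-1}$ rather than something weaker. The trick, as in~\cite{gamarnik2012strong}, is that $\pi_i$ and $\rho_i$ are two \emph{distinct} colors at $v_i$, so one can subtract a common additive constant (the $\infty-\infty=0$ / normalization freedom) and arrange that the relevant marginal values are simultaneously small, whence the $\frac{1}{1-\frac12}$ slack in Lemma~\ref{lemma-marginal-bounds} is not needed and the sharper denominator survives; one must also check that all intermediate instances $(G_B,\clist{L}_{i,\pi,\rho})$ are feasible with $v_i$ still having list-surplus at least $2$, which is exactly what the first two bullets of Lemma~\ref{lemma-partial-pinning-invariant} provide. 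The remaining bookkeeping — that $\dist(B,\Delta)\ge2$ is preserved so the marginal bounds apply, and that the block $B$ being permissive means every $v_i$ keeps $|L(v_i)|-d(v_i)\ge2$ throughout — is routine.
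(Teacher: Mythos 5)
Your proposal follows the paper's own route: reduce $\err(\mu_v^\sigma,\mu_v^\tau)\le\err(\mu_B^\sigma,\mu_B^\tau)$ by the mediant (ratio-of-sums) inequality, expand $\err(\mu_B^\sigma,\mu_B^\tau)$ via the telescoping recursion of Lemma~\ref{lemma-partial-pinning-invariant}, and then do a per-edge mean-value contraction. The factor arithmetic $\frac{1}{1-1/(|L(v_i)|-d(v_i))}\cdot\frac{1}{|L(v_i)|-d(v_i)}=\frac{1}{|L(v_i)|-d(v_i)-1}$ is exactly right, and your feasibility/permissiveness bookkeeping matches the paper's.

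One point in your sketch is imprecise and is actually the crux. You frame the per-edge bound as a ``one-variable estimate'' on $f(a,b)=\log\frac{1-a}{1-b}$ with the bracket compared across $\sigma$ vs.\ $\tau$; and you wave at ``subtracting a common additive constant'' to get the sharp denominator. What the paper actually does is first rearrange the sum so that each term compares $\sigma$ to $\tau$ at a \emph{single} color ($\pi_i$ or $\rho_i$ separately), then applies the one-variable mean value theorem to $g(x)=\log(1-\mathrm{e}^x)$ in the log-probability variable, producing
\[
\tfrac{p_i}{1-p_i}\log\tfrac{\mu_i^\tau(\pi_i)}{\mu_i^\sigma(\pi_i)}-\tfrac{p_i'}{1-p_i'}\log\tfrac{\mu_i^\tau(\rho_i)}{\mu_i^\sigma(\rho_i)}
\]
with possibly \emph{different} prefactors $p_i/(1-p_i)$ and $p_i'/(1-p_i')$. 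To recombine these two terms into a single $\frac{1}{|L(v_i)|-d(v_i)-1}\cdot\err(\mu_i^\sigma,\mu_i^\tau)$, the paper uses the sign observation that, since $\mu_i^\sigma$ and $\mu_i^\tau$ are probability distributions on the same set, $\max_y\log\frac{\mu_i^\tau(y)}{\mu_i^\sigma(y)}\ge 0$ and $\min_x\log\frac{\mu_i^\tau(x)}{\mu_i^\sigma(x)}\le 0$. This is what lets you replace each distinct prefactor by the common worst-case bound $\frac{1}{|L(v_i)|-d(v_i)-1}$ without losing sign coherence. That sign argument, not a slack in Lemma~\ref{lemma-marginal-bounds}, is the mechanism behind the ``normalization freedom'' you allude to; with it spelled out, your outline is complete and matches the paper.
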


\newcommand{\ProofErrorBlockRecursion}{
Let $x,y\in L(v)$ denote the two colors of $v$ achieving the maximum in 
\[
\err(\mu_v^\sigma,\mu_v^\tau)=\max_{x,y\in L(v)}\left(\log\left(\frac{\mu_v^\sigma(x)}{\mu_v^\tau(x)}\right)-\log\left(\frac{\mu_v^\sigma(y)}{\mu_v^\tau(y)}\right)\right).
\]
We then have
\ifabs{
\begin{align*}
\err(\mu_{v}^\sigma,\mu_{v}^\tau)
=&
\log\left(\frac{\pr_{G,\clist{L}}(c(v)=x\mid \sigma)}{\pr_{G,\clist{L}}(c(v)=x\mid \tau)}\right)-\log\left(\frac{\pr_{G,\clist{L}}(c(v)=y\mid \sigma)}{\pr_{G,\clist{L}}(c(v)=y\mid \tau)}\right)\\
=&
\log\left(\dfrac{\sum_{\pi: \pi(v)=x}\pr_{G,\clist{L}}(c(B)=\pi\mid \sigma)}{\sum_{\pi: \pi(v)=x}\pr_{G,\clist{L}}(c(B)=\pi\mid \tau)}\right)\\
&-\log\left(\dfrac{\sum_{\rho: \rho(v)=y}\pr_{G,\clist{L}}(c(B)=\rho\mid \sigma)}{\sum_{\rho: \rho(v)=y}\pr_{G,\clist{L}}(c(B)=\rho\mid \tau)}\right).
\end{align*}
}{
\begin{align*}
\err(\mu_{v}^\sigma,\mu_{v}^\tau)
&=
\log\left(\frac{\pr_{G,\clist{L}}(c(v)=x\mid \sigma)}{\pr_{G,\clist{L}}(c(v)=x\mid \tau)}\right)-\log\left(\frac{\pr_{G,\clist{L}}(c(v)=y\mid \sigma)}{\pr_{G,\clist{L}}(c(v)=y\mid \tau)}\right)\\
&=
\log\left(\dfrac{\sum_{\pi: \pi(v)=x}\pr_{G,\clist{L}}(c(B)=\pi\mid \sigma)}{\sum_{\pi: \pi(v)=x}\pr_{G,\clist{L}}(c(B)=\pi\mid \tau)}\right)-\log\left(\dfrac{\sum_{\rho: \rho(v)=y}\pr_{G,\clist{L}}(c(B)=\rho\mid \sigma)}{\sum_{\rho: \rho(v)=y}\pr_{G,\clist{L}}(c(B)=\rho\mid \tau)}\right).
\end{align*}
}
Due to Lemma~\ref{lemma-local-feasible}, since $B$ is a permissive block and $\mathrm{dist}(B,\Delta)\ge2$,
we have that $\pr_{G,\clist{L}}(c(B)=\pi\mid \sigma)>0$ if and only if $\pi$ is proper on $B$ (and the same also holds for  condition $\tau$). Recall that we use $L^*(B)$ to denote the set of proper colorings of $B$. Therefore, we have 
\begin{align}
\err(\mu_{v}^\sigma,\mu_{v}^\tau)
&\le
\log\max_{\pi\in L^*(B)}\left(\frac{\pr_{G,\clist{L}}(c(B)=\pi\mid \sigma)}{\pr_{G,\clist{L}}(c(B)=\pi\mid \tau)}\right)-\log\min_{\rho\in L^*(B)}\left(\frac{\pr_{G,\clist{L}}(c(B)=\rho\mid \sigma)}{\pr_{G,\clist{L}}(c(B)=\rho\mid \tau)}\right)\notag\\
&=
\max_{\pi,\rho\in L^*(B)}\left(\log\left(\frac{\mu_{B}^\sigma(\pi)}{\mu_{B}^\tau(\pi)}\right)-\log\left(\frac{\mu_{B}^\sigma(\rho)}{\mu_{B}^\tau(\rho)}\right)\right)\notag\\
&=
\log\left(\frac{\mu_{B}^\sigma(\pi)}{\mu_{B}^\tau(\pi)}\right)-\log\left(\frac{\mu_{B}^\sigma(\rho)}{\mu_{B}^\tau(\rho)}\right)= \err(\mu_{B}^\sigma,\mu_{B}^\tau),\label{eq:vertex-margin-to-block-margin}
\end{align}
where $\pi,\rho\in L^*(B)$ are the colorings of $B$ achieving the maximum in
\begin{align*}
\err(\mu_{B}^\sigma,\mu_{B}^\tau)
=
\log\left(\frac{\mu_{B}^\sigma(\pi)}{\mu_{B}^\tau(\pi)}\right)-\log\left(\frac{\mu_{B}^\sigma(\rho)}{\mu_{B}^\tau(\rho)}\right)
=
\log\left(\frac{\mu_{B}^\sigma(\pi)}{\mu_{B}^\sigma(\rho)}\right)-\log\left(\frac{\mu_{B}^\tau(\pi)}{\mu_{B}^\tau(\rho)}\right).
\end{align*}

For $i=1,2,\ldots,m$, we denote by $\mu_{i}^\sigma$ (and $\mu_i^\tau$) the marginal distribution $\mu_{v_i}^\sigma$ (and respective $\mu_{v_i}^\tau$) in the new instance $(G_B,\clist{L}_{i,\pi,\rho})$. Applying the recursion in Lemma~\ref{lemma-partial-pinning-invariant}, we have
\ifabs{
\begin{align*}
\err(\mu_{B}^\sigma,\mu_{B}^\tau)
=&
\log\left(\frac{\mu_{B}^\sigma(\pi)}{\mu_{B}^\sigma(\rho)}\right)-\log\left(\frac{\mu_{B}^\tau(\pi)}{\mu_{B}^\tau(\rho)}\right)\\
=&
\sum_{i=1}^m\left[\log\left(1-\mu_i^\sigma(\pi_i)\right)-\log\left(1-\mu_i^\tau(\pi_i)\right)\right]\\
&-\sum_{i=1}^m\left[\log\left(1-\mu_i^\sigma(\rho_i)\right)-\log\left(1-\mu_i^\tau(\rho_i)\right)\right].
\end{align*}
}
{
\begin{align*}
\err(\mu_{B}^\sigma,\mu_{B}^\tau)
&=
\log\left(\frac{\mu_{B}^\sigma(\pi)}{\mu_{B}^\sigma(\rho)}\right)-\log\left(\frac{\mu_{B}^\tau(\pi)}{\mu_{B}^\tau(\rho)}\right)\\
&=
\sum_{i=1}^m\left[\log\left(1-\mu_i^\sigma(\pi_i)\right)-\log\left(1-\mu_i^\sigma(\rho_i\right)\right]
-\sum_{i=1}^m\left[\log\left(1-\mu_i^\tau(\pi_i)\right)-\log\left(1-\mu_i^\tau(\rho_i)\right)\right]\\
&=
\sum_{i=1}^m\left[\log\left(1-\mu_i^\sigma(\pi_i)\right)-\log\left(1-\mu_i^\tau(\pi_i)\right)\right]
-\sum_{i=1}^m\left[\log\left(1-\mu_i^\sigma(\rho_i)\right)-\log\left(1-\mu_i^\tau(\rho_i)\right)\right].
\end{align*}
}

Let $f(x)=\log\left(1-\mathrm{e}^x\right)$. By the mean value theorem, there exists $\min(x_1,x_2)\le \xi\le \max(x_1,x_2)$ such that 
\[
f(x_1)-f(x_2)=f'(\xi)(x_1-x_2)=\frac{\xi}{1-\xi}(x_2-x_1).
\]
Letting $x_1=\log\mu_i^{\sigma}(\pi_i)$, $x_2=\log\mu_i^\tau(\pi_i)$ (and respectively $x_1=\log\mu_i^{\sigma}(\rho_i)$, $x_1=\log\mu_i^\tau(\rho_i)$), we have for every $1\le i\le m$ that
\begin{align*}
\sum_{i=1}^m\left[\log\left(1-\mu_i^\sigma(\pi_i)\right)-\log\left(1-\mu_i^\tau(\pi_i)\right)\right]
&=
\sum_{i=1}^m\frac{p_i}{1-p_i}\log\left(\frac{\mu_i^\tau(\pi_i)}{\mu_i^\sigma(\pi_i)}\right),\\
\sum_{i=1}^m\left[\log\left(1-\mu_i^\sigma(\rho_i)\right)-\log\left(1-\mu_i^\tau(\rho_i)\right)\right]
&=
\sum_{i=1}^m\frac{p'_i}{1-p'_i}\log\left(\frac{\mu_i^\tau(\rho_i)}{\mu_i^\sigma(\rho_i)}\right),
\end{align*}
where $p_i\le\max\{\mu_i^\sigma(\pi_i),\mu_i^\tau(\pi_i)\}$, $p'_i\le\max\{\mu_i^\sigma(\rho_i),\mu_i^\tau(\rho_i)\}$. 
Note that $v_i$ must have $|L'(v_i)|>d'(v_i)+1$ where $L'(v_i)$ and $d'(v_i)$ are the respective color list and degree of vertex $v_i$ in the new instance $(G_B,\clist{L}_{i,\pi,\rho})$ because $v_i$ is a boundary vertex of a permissive block $B$ in $(G,\clist{L})$ and $|L'(v_i)|-d'(v_i)\ge |L(u)|-d(u)$ since whenever a color is removed from $L(v_i)$ to form $L'(v_i)$, a neighbor of $v_i$ must also be deleted. 
Thus by Lemma~\ref{lemma-marginal-bounds} we have $p_i\le\frac{1}{|L'(v_i)|-d'(v_i)}\le\frac{1}{|L(v_i)|-d(v_i)}$ and $p_i'\le\frac{1}{|L'(v_i)|-d'(v_i)}\le\frac{1}{|L(v_i)|-d(v_i)}$ 
for all $1\le i\le m$, and hence  for all $1\le i\le m$ it holds that
\begin{align*}
\frac{p_i}{1-p_i}\le\frac{1}{|L(v_i)|-d(v_i)-1}\quad\mbox{ and }\quad \frac{p_i'}{1-p_i'}\le\frac{1}{|L(v_i)|-d(v_i)-1}.
\end{align*}
Since $\mu_i^{\sigma}$ and $\mu_i^{\tau}$ are well-defined marginal distributions at vertex $v_i$, we have $\sum_{x\in L(v_i)}\mu_i^\sigma(x)=\sum_{x\in L(v_i)}\mu_i^\tau(x)=1$. Therefore, with a convention $0/0=1$, it can be verified that
\begin{align*}
\max_{y\in L(v_i)}
\log\left(\dfrac{\mu_i^\tau(y)}{\mu_i^\sigma(y)}\right)
\ge 0
\quad\mbox{ and }\quad
\min_{x\in L(v_i)}
\log\left(\dfrac{\mu_i^\tau(x)}{\mu_i^\sigma(x)}\right)
\le 0.
\end{align*}
Therefore, it holds that
\ifabs{
\begin{align*}
\err(\mu_{B}^\sigma,\mu_{B}^\tau)
\le&
\sum_{i=1}^m\frac{p_i}{1-p_i}\log\left(\frac{\mu_i^\tau(\pi_{i})}{\mu_i^\sigma(\pi_{i})}\right)
-\sum_{i=1}^m\frac{p'_i}{1-p'_i}\log\left(\frac{\mu_i^\tau(\rho_{i})}{\mu_i^\sigma(\rho_{i})}\right)\\
\le&
\sum_{i=1}^m\frac{p_i}{1-p_i}\max_{y\in L(v_i)}\log\left(\dfrac{\mu_i^\tau(y)}{\mu_i^\sigma(y)}\right)
-\sum_{i=1}^m\frac{p'_i}{1-p'_i}\min_{x\in L(v)}\log\left(\dfrac{\mu_i^\tau(x)}{\mu_i^\sigma(x)}\right)\\
\le&
\sum_{i=1}^m\frac{1}{|L(v_i)|-d(v_i)-1}\max_{y\in L(v_i)}\log\left(\dfrac{\mu_i^\tau(y)}{\mu_i^\sigma(y)}\right)\\
&-\sum_{i=1}^m\frac{1}{|L(v_i)|-d(v_i)-1}\min_{x\in L(v)}\log\left(\dfrac{\mu_i^\tau(x)}{\mu_i^\sigma(x)}\right)\\
=&
\sum_{i=1}^m\frac{1}{|L(v_i)|-d(v_i)-1}
\max_{x,y\in L(v_i)}\left[\log\left(\dfrac{\mu_i^\sigma(x)}{\mu_i^\tau(x)}\right) - \log\left(\dfrac{\mu_i^\sigma(y)}{\mu_i^\tau(y)}\right)\right]\\
=&
\sum_{i=1}^m\frac{1}{|L(v_i)|-d(v_i)-1}\cdot\err(\mu_i^\sigma,\mu_i^\tau).
\end{align*}
}
{
\begin{align*}
\err(\mu_{B}^\sigma,\mu_{B}^\tau)
&\le
\sum_{i=1}^m\frac{p_i}{1-p_i}\log\left(\frac{\mu_i^\tau(\pi_{i})}{\mu_i^\sigma(\pi_{i})}\right)
-\sum_{i=1}^m\frac{p'_i}{1-p'_i}\log\left(\frac{\mu_i^\tau(\rho_{i})}{\mu_i^\sigma(\rho_{i})}\right)\\
&\le
\sum_{i=1}^m\frac{p_i}{1-p_i}\max_{y\in L(v_i)}\log\left(\dfrac{\mu_i^\tau(y)}{\mu_i^\sigma(y)}\right)
-\sum_{i=1}^m\frac{p'_i}{1-p'_i}\min_{x\in L(v)}\log\left(\dfrac{\mu_i^\tau(x)}{\mu_i^\sigma(x)}\right)\\
&\le
\sum_{i=1}^m\frac{1}{|L(v_i)|-d(v_i)-1}\max_{y\in L(v_i)}\log\left(\dfrac{\mu_i^\tau(y)}{\mu_i^\sigma(y)}\right)
-\sum_{i=1}^m\frac{1}{|L(v_i)|-d(v_i)-1}\min_{x\in L(v)}\log\left(\dfrac{\mu_i^\tau(x)}{\mu_i^\sigma(x)}\right)\\
&=
\sum_{i=1}^m\frac{1}{|L(v_i)|-d(v_i)-1}
\max_{x,y\in L(v_i)}\left[\log\left(\dfrac{\mu_i^\sigma(x)}{\mu_i^\tau(x)}\right) - \log\left(\dfrac{\mu_i^\sigma(y)}{\mu_i^\tau(y)}\right)\right]\\
&=
\sum_{i=1}^m\frac{1}{|L(v_i)|-d(v_i)-1}\cdot\err(\mu_i^\sigma,\mu_i^\tau).
\end{align*}
}

Combining with~\eqref{eq:vertex-margin-to-block-margin}, we have
\begin{align*}
\err(\mu_{v}^\sigma,\mu_{v}^\tau)
&\le 
\sum_{i=1}^m\frac{1}{|L(v_i)|-d(v_i)-1}\cdot\err(\mu_i^\sigma,\mu_i^\tau).
\end{align*}
Recall that $\mu_i^{\sigma}=\mu^\sigma_{v_i}$ and $\mu_i^{\tau}=\mu^\tau_{v_i}$ are the respective marginal distributions at $v_i$ in the new list-coloring instance $(G_B,\clist{L}_{i,\pi,\rho})$. 
}
\ifabs{
}
{
\begin{proof}
\ProofErrorBlockRecursion
\end{proof}
}

\ifabs{
With the above block-wise decay, we are now ready to prove Lemma~\ref{lemma-SAW-decay}, which implies Theorem~\ref{thm-SAW-decay}. 
}
{
Now we are ready to prove Lemma~\ref{lemma-SAW-decay}, which implies Theorem~\ref{thm-SAW-decay}. 
}
\begin{proof}[Proof of Lemma~\ref{lemma-SAW-decay}]
Given a feasible list-coloring instance $(G,\clist{L})$ and a vertex $v$, let $T=T_\SAW(G,v)$ and $S$ a permissive cutset in $T$ separating $v$ and $\Delta$. 
We consider the following procedure:
\begin{enumerate}
\item 
Let $B$ be the \emph{minimal} permissive block around $v$ with edge boundary $\delta B=\{e_1,e_2,\ldots,e_m\}$, where $e_i=u_iv_i$ for $i=1,2,\ldots,m$ (note that more than one $u_i$ or $v_i$ may refer to the same vertex). 
By Lemma~\ref{lemma-error-block-recursion}, 
we have
\begin{align}
\err(\mu_v^\sigma,\mu_v^\tau)
\le
\sum_{i=1}^m\frac{1}{|L(v_i)|-d(v_i)-1}\cdot\err(\mu_i^\sigma,\mu_i^\tau),\label{eq:error-block-recursion}
\end{align}
where $\mu_i^\sigma=\mu_{G_B,\clist{L}_{i,\pi,\rho},v_i}^\sigma$ and $\mu_i^\tau=\mu_{G_B,\clist{L}_{i,\pi,\rho},v_i}^\tau$ are the respective marginal distributions at $v_i$ in the new list-coloring instance $(G_B,\clist{L}_{i,\pi,\rho})$ for the $\pi,\rho\in L^*(B)$ defined in Lemma~\ref{lemma-error-block-recursion}. 
By Lemma~\ref{lemma-partial-pinning-invariant}, all these new list-coloring instances are feasible. 

\item
We identify each $v_i$ with a distinct self-avoiding walk in $G$ from $v$ to $v_i$ through only vertices in $B$ and approaching $v_i$ via the edge $e_i=u_iv_i$. 
Such self-avoiding walk must exist or otherwise $B$ is not minimal.
If there are more than one such self-avoiding walk for a $v_i$, choose an arbitrary one to identify $v_i$ with.
We use $w_i$ to denote this walk to $v_i$.

Note that along every such self-avoiding walk $w_i$ from $v$ to $v_i$, all vertices $u$ except $v$ and $v_i$ must have $|L(u)|\le d(u)+1$ in $(G,\clist{L})$ or otherwise $B$ is not minimal.
Thus by  Definition~\ref{definition-SAW-decay}, 
in quantity $\err_{T,\clist{L},S}$, along every walk $w_i$ from $v$ to $v_i$, at each intermediate vertex $u\not\in\{v, v_i\}$, only a factor of $\contr(u)=1$ is multiplied in $\err_{T,\clist{L},S}$, so we have
\begin{align}
\err_{T,\clist{L},S}\ge \sum_{i=1}^m\frac{1}{|L(v_i)|-d(v_i)-1}\err_{T_{w_i},\clist{L},S},\label{eq:SAW-decay-IH}
\end{align}
where $T_{w_i}$ denotes the subtree of the SAW tree $T$ rooted by the self-avoiding walk $w_i$.

\item
For each $1\le i\le m$,
if the self-avoiding walk $w_i$ corresponds to a vertex in the permissive cutset $S$ in the SAW tree $T$, then $v_i$ itself must be permissive in $(G,\clist{L})$ and $\dist(v_i,\Delta)\ge 2$, both of which continue to hold in the new instance $(G_B,\clist{L}_{i,\pi,\rho})$.
By Lemma~\ref{lemma-marginal-bounds}, we have $\mu_{i}^\sigma(x), \mu_{i}^\tau(x)\in \left[\frac{1}{q2^{q-2}},\frac{1}{2}\right]$ for any $x\in L(v_i)$, thus 
\begin{align}
\err\left(\mu_{i}^\sigma, \mu_{i}^\tau\right)
\le 
2\left(\ln q+q\ln 2\right)\le 3 q;\label{eq:SAW-decay-basis}
\end{align}
and if otherwise, $w_i$ is not in $S$ in the SAW tree $T$, we repeat from the first step for vertex $v_i$ in the new instance $(G_B,\clist{L}_{i,\pi,\rho})$.
\end{enumerate}
We can then apply an induction to prove that $\err(\mu_v^\sigma,\mu_v^\tau)\le \err_{T,\clist{L},S}$, with~\eqref{eq:SAW-decay-basis} as basis, and~\eqref{eq:error-block-recursion} and \eqref{eq:SAW-decay-IH} as induction step. We only need to clarify that each application of~\eqref{eq:error-block-recursion} creates new instances $(G_B,\clist{L}_{i,\pi,\rho})$, while $\err_{T,\clist{L},S}$ is defined using only the original instance $(G,\clist{L})$. This will not cause any issue because by Lemma~\ref{lemma-partial-pinning-invariant}, every new instance $(G_B,\clist{L}_{i,\pi,\rho})$ created during this procedure must be feasible. Moreover, the operation the new instance $(G_B,\clist{L}_{i,\pi,\rho})$ applying on $(G,\clist{L})$ never makes any vertex less permissive, and never increases the multiplicative factor $\frac{1}{|L(v_i)|-d(v_i)-1}$ in the recursion.
\end{proof}

\section{Strong spatial mixing on random graphs}
\label{sec:graph-ssm}
In this section, we prove Theorem~\ref{theorem-ssm-random-graph}, the strong spatial mixing of $q$-coloring of random graph $G(n,d/n)$ with respect to a fixed vertex.
The theorem is proved by applying Theorem~\ref{thm-SAW-decay} to random graph $G(n,d/n)$.
\ifabs{
The following lemma states the existing with high probability of a good permissive cutset in the self-avoiding walk tree of a random graph $G(n,d/n)$. The proof is in Appendix~\ref{appendix-SSM}.
}
{
We first prove the following lemma which states the existing with high probability of a good permissive cutset in the self-avoiding walk tree of a random graph $G(n,d/n)$.
}

\begin{lemma}\label{lemma-permissive-separator-TSAW}
Let $d>1$, $q\ge\alpha d+\beta$ for $\alpha>2$ and $\beta\ge23$, and $t(n)=\omega(1)$ an arbitrary super-constant function.
Let $v\in V$ be arbitrarily fixed and $G=(V,E)$ a random graph draw from $G(n,d/n)$. 
The following event holds with high probability:
for any $t(n)\le t\le \frac{\ln n}{\ln d}$ and any vertex set $\Delta\subset V$ satisfying $\dist_G(v,\Delta)> 2t$, 
there exists a permissive cutset $S$ in $T=T_\SAW(G,v)$ for $v$ and $\Delta$ such that $t\le \dist_T(v,u)<2t$ for all vertices $u\in S$.
\end{lemma}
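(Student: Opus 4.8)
The plan is to show that with high probability a good permissive cutset exists in the SAW tree by combining two structural facts about $G(n,d/n)$: the local sparsity of neighborhoods of a fixed vertex, and the scarcity of high-degree vertices near a fixed vertex. Concretely, fix $v$ and let $T=T_{\SAW}(G,v)$. For a radius parameter $t$ in the stated range, consider the set of vertices $u$ in $G$ with $t \le \dist_G(v,u) < 2t$. I would build the cutset $S$ out of SAW-tree vertices whose depth lies in $[t,2t)$ and which are identified with \emph{permissive} vertices in $G$, i.e. vertices $u$ with $|L(u)| = q > d_G(u)+1$ and the same for all neighbors of $u$. The first requirement in Definition~\ref{definition-cutset} (no vertex of $S$ identified with $v$ or a vertex within distance $2$ of $\Delta$) is automatic because $\dist_G(v,\Delta) > 2t$ while $S$ sits at depth $<2t$, so every vertex of $S$ is at distance $>2t - (2t-1) \ge 1$... more carefully, a vertex of $S$ at depth $<2t$ has $G$-distance $\le 2t-1$ from $v$, hence $G$-distance $> 2t - (2t-1) = 1$ from $\Delta$; to get distance $\ge 2$ we take $S$ at depth $\le 2t-2$, i.e. work with the range $[t, 2t-2]$, which is harmless.

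The crux is the \emph{cutting} property: every self-avoiding walk from $v$ to $\Delta$ in $G$ must pass through $S$. Since $\dist_G(v,\Delta)>2t$, any such walk has a vertex at every $G$-distance $0,1,\ldots$ up to at least $2t$ from $v$; in particular it has a prefix ending at depth exactly $2t-2$ in $T$ (the SAW-tree depth of the prefix equals its length). So the walk is cut as long as \emph{all} vertices along it in the depth window $[t,2t-2]$ that we might use are permissive — but that is too strong. The right statement is: the walk is cut unless it can travel from depth $t$ all the way to depth $2t-2$ while avoiding every permissive vertex, i.e. staying entirely within non-permissive (``bad'') vertices. A vertex $u$ is bad iff $u$ or one of its neighbors has degree $\ge q-1 \ge \alpha d + \beta - 1$ in $G$. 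So it suffices to show: w.h.p., for every $t$ in range, there is \emph{no} self-avoiding walk in $G$ starting within distance $t$ of $v$, of length $\ge t - O(1)$, all of whose vertices are bad. Equivalently, the ``bad subgraph'' restricted to the distance-$2t$ ball around $v$ has no long paths reaching from radius $t$ to radius $2t-2$.

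To prove that, I would do a first-moment (union bound) computation. The number of self-avoiding walks of length $\ell$ starting from $v$ in $G(n,d/n)$ is in expectation about $d^\ell$ (more precisely, $\le n\cdot(d/n)^\ell \cdot \binom{\text{choices}}{}$, giving $\le d^\ell$ up to lower-order factors, and one must control this uniformly — standard since $\ell \le 2\ln n/\ln d$ keeps $d^\ell \le n^2$). Conditioned on the vertex sequence of such a walk, each of its $\ell+1$ vertices is independently ``bad'' with probability at most $p_{\mathrm{bad}} := \Pr[\deg \ge q-1] + (\text{degree})\cdot\Pr[\text{a given neighbor has }\deg\ge q-1]$; since degrees are $\mathrm{Bin}(n-1,d/n) \approx \Po(d)$ and $q - 1 \ge \alpha d + \beta - 1$ with $\beta$ a large constant, a Poisson/Chernoff tail gives $p_{\mathrm{bad}} \le e^{-\gamma q}$ or at least $p_{\mathrm{bad}} \le c/q!$-type bound, in any case $p_{\mathrm{bad}} \le 1/(2d)^2$ once $\beta$ is large enough (this is exactly where $\beta \ge 23$ enters). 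But bad-ness of nearby vertices is not independent along the walk; to handle this I would use the standard trick of looking at every \emph{other} vertex along the walk (vertices at even positions, say), whose ``degree $\ge q-1$'' events depend on disjoint edge sets and hence are genuinely independent, and note that the walk being all-bad forces in particular $\Omega(\ell)$ of these to have high degree or high-degree neighbors — the neighbor events still entangle, so more cleanly: the all-bad walk forces, for each vertex, a witnessing high-degree vertex among itself and its neighbors; collecting the witnesses for positions $0, 3, 6, \dots$ along the walk gives $\Omega(\ell)$ \emph{distinct} vertices each of degree $\ge q-1$, with independent (disjoint-edge) degree events. The expected number of such configurations is then at most $\sum_{\ell \ge t-O(1)} d^{\ell} \cdot (\text{poly})\cdot (p')^{\Omega(\ell)}$ where $p' = \Pr[\Po(d) \ge q-1] \le (ed/(q-1))^{q-1}$; choosing $\beta = O(1)$ large makes $d \cdot (p')^{1/3} < 1/2$, so the sum is $\le \sum_{\ell \ge t - O(1)} 2^{-\ell} = 2^{-t + O(1)} = o(1)$ since $t = \omega(1)$. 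Union-bounding over the $O(\log n)$ values of $t$ keeps this $o(1)$. This yields the desired event w.h.p.

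The main obstacle, and the step deserving the most care, is exactly this dependency issue in the first-moment bound: ``bad'' is not a local independent event, because the high-degree witness of one walk-vertex can be shared with its neighbors, and degree events of adjacent vertices share an edge. I expect the cleanest route is to reduce ``all vertices of the walk are bad'' to ``there exist $\Omega(\ell)$ vertices in the $2t$-ball that have degree $\ge q-1$ and are pairwise at distance $\ge 3$,'' bound the number of such sets by $\binom{|B_{2t}(v)|}{\Omega(\ell)} \le (Cd^{2t})^{\Omega(\ell)}$ times the independent probability $(p')^{\Omega(\ell)}$ — but then $d^{2t}$ in the base is dangerous, so instead I would keep the walk structure (only $d^{\ell}$ walks, not $d^{2t}$ subsets) and argue the witnesses lie \emph{on or adjacent to} the walk, of which there are only $O(d\ell)$ candidate vertices, charging each independent high-degree event with its $(p')$ factor and absorbing the $O(d\ell)$ multiplicity into the geometric decay. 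Also one must ensure the bounds on $|L(u)|$: here $L(u)=[q]$ for all $u$, so ``permissive'' is purely a degree condition, simplifying matters. Finally I would double-check the off-by-constant issues in the depth window so that clause (1) of Definition~\ref{definition-cutset} holds with room to spare and the cutset genuinely sits in $[t,2t)$.
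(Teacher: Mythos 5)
Your high-level strategy matches the paper's: reduce the lemma to a first-moment bound showing that w.h.p.\ no self-avoiding walk of length $\approx 2t$ starting from $v$ can have every vertex in the depth window $[t,2t)$ non-permissive, observe that a single non-permissive event has probability roughly $\Pr[\Po(d)\ge q-1]$ which beats $1/d^{O(1)}$ once $\beta$ is large, use a spacing-by-three trick to extract near-independence along the walk, and union-bound over the $O(\log n/\log d)$ values of~$t$. The reduction, the role of $\beta$, and the spacing idea are all the same.

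Where you leave a genuine gap is in the core dependency argument, and your proposed fix would not survive scrutiny. You assert that taking positions $0,3,6,\dots$ along the walk ``gives $\Omega(\ell)$ \emph{distinct} vertices each of degree $\ge q-1$, with independent (disjoint-edge) degree events.'' Neither claim is justified: two positions $v_{3i}$ and $v_{3j}$ that are far apart \emph{along the walk} can share a common high-degree neighbor in $G$ (there is no reason the walk is an induced path), so the witnesses need not be distinct; and even for distinct witnesses $w,w'$, the events $\deg(w)\ge q-1$ and $\deg(w')\ge q-1$ share the edge $\{w,w'\}$ and, more importantly, many witness vertices can be mutually adjacent, so over $m=\Theta(t)$ witnesses (with $t$ up to $\ln n/\ln d$) the cumulative degree shift from internal edges can be $\Theta(\log n)\gg q$. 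Your later remark about ``absorbing the $O(d\ell)$ multiplicity into the geometric decay'' gestures at this but does not resolve it. The paper resolves exactly this issue differently: instead of hunting for a set of distinct independent witnesses, it conditions on $P$ being a path and then \emph{peels off subgraphs}: it reveals the closed neighborhood $N_{i-1}$ of each chosen position $v_{t+3(i-1)}$, deletes it, and observes that the next position's permissiveness, computed inside the residual graph $G_i=G_P[V_i]$, is a fresh binomial event. To make sure the peeling does not distort the notion of permissiveness, it introduces the events $A_i$ (that $\bigcup_j N_j$ stays polylogarithmic) and $B_i$ (that each vertex near $v_{t+3i}$ has at most $3$ edges into the previously removed set), both of which hold except with probability $o(n^{-3})$; conditioned on these, permissiveness in the full graph is implied by permissiveness in the smaller graph with $q-3$ colors, and the individual probabilities genuinely multiply. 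You would need to replace your ``distinct witnesses'' step with something like this deferred-decision peeling (or a comparably careful accounting of the shared edges) before the first-moment computation is sound.
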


\newcommand{\ProofPermissiveSeparatorTSAW}{
It is sufficient to show that with high probability, for any $t(n)\le t\le \frac{\ln n}{\ln d}$, and for any path $v=v_0\to v_1\cdots\to v_{2t}$ in $G$, there exists a $t\le i<2 t$ such that $v_i$ is permissive in $(G,[q])$. Without loss of generality, we can assume $q=2d+23$ because the desirable permissiveness will be implied for any greater $q$. 

Let $t(n)\le t\le \frac{\ln n}{\ln d}$.
Consider an arbitrarily fixed tuple $P=(v,v_1,v_2,\ldots,v_{2t})$ of $2t+1$ distinct vertices. We are going to show that there exists a constant $\gamma<1$ such that
\begin{align}
\Pr\left[\forall t\le i<2t, v_i\text{ is not permissive}\mid P\text{ is a path in }G\right]=O\left(\gamma^t d^{-2t}\right). \label{eq:2-permissive-separator}
\end{align}
Note that this implies the statement of the lemma. To see so, assuming~\eqref{eq:2-permissive-separator}, by union bound, the probability that there exists a $P=(v,v_1,v_2,\ldots,v_{2t})$ such that $P$ is a path in $G$ and no $v_i\in P$ with $t\le i<2t$ is permissive is bounded by
\begin{align*}
&\sum_{P=(v,v_1,\ldots,v_{2t})}\Pr\left[\forall t\le i<2t, v_i\in P\text{ is not permissive}\wedge P\text{ is a path }\right]\\
\le
&n^{2t}\left(\frac{d}{n}\right)^{2t} \Pr\left[\forall t\le i<2t, v_i\in P\text{ is not permissive}\mid P\text{ is a path }\right]\\
=
&O\left(\gamma^t\right),
\end{align*}
and by union bound $\sum_{t\ge t(n)}O(\gamma^t)=O(\gamma^{t(n)})=o(1)$, thus with high probability for all $t(n)\le t\le \frac{\ln n}{\ln d}$ and every path $v\to v_1\cdots\to v_{2t}$, we have a $t\le i<2 t$ such that $v_i$ is permissive, which implies the lemma.

Next we prove~\eqref{eq:2-permissive-separator}. Note that if being permissive for each vertex $v_i$ along the path is independent, then the upper bound is immediate, however, they are not. We resolve this by decomposing $G$ into a sequence of subgraphs.

Suppose $P=(v,v_1,v_2,\ldots,v_{2t})$, let $G_P$ be the random graph distributed according to $G(n,d/n)$ conditioning on that $P$ is a path. 
Let $d'(u)$ denote the degree of $u$ in $G$ contributed by the edges not in $P$. We override the definition of permissiveness so that $u$ is permissive if $q>d'(w)+3$ for all neighbors $w$ of $u$ and $w=u$. 
Clearly a vertex is permissive in the original sense if it is permissive in this new definition.
Let $N(u)=\{u\}\cup\{w\mid uw\in G_P\}$ denote the neighborhood of $u$ including $u$ itself in $G_P$.

Let $\ell=\floor{(t-1)/3}$ and $G_0,G_1,\ldots,G_\ell$ a sequence of random graphs defined as follows: 
Let $G_0=G_P$, $V_0=V$ and $N_0=\bigcup_{j=1}^t N(v_j)$; and for $i=1,2,\ldots,\ell$,
\begin{itemize}
\item let $V_{i}=V_{i-1}\setminus N_{i-1}$ and $G_i$ the subgraph of $G_P$ induced by $V_i$;
\item let $N_i=N(v_{t+3i})\cap V_i$ be the neighborhood of $v_{t+3i}$ in $G_i$. 
\end{itemize}
In fact, each $G_i$ is a random graph with vertex set $V_i$ distributed according to $G(|V_i|,d/n)$ conditioning on $(v_{t+3i-1},v_{t+3i},\ldots,v_{2t})$ being a path; and
each $N_i$ can be constructed by sampling each vertex in $V_i\setminus\{v_{t+3i-1},v_{t+3i},v_{t+3i+1}\}$ independent with probability $\frac{d}{n}$ and adjoining with $\{v_{t+3i-1},v_{t+3i+1}\}$ afterwards. 

Let $S_i=\bigcup_{j\le i}N_j$. 
Let $A_i$ denote the event that $|S_i|<(\log n)^4$, and $A=\bigcap_{i=1}^{\ell}A_i$ denote the joint event that all $A_i$ occur simultaneously.
Let $B_i$ denote the event that every vertex in $N(v_{t+3i})$ is adjacent to at most 3 vertices in $S_{i-1}$ by random edges not in $P$,
and $B=\bigcap_{i=1}^{\ell}B_i$ denote the joint event that all $B_i$ occur simultaneously.
Let $n$ be sufficiently large. 
It is easy to see that $A$ holds with probability $1-o(n^{-4})$, because $\sum_{j\le i}|N_j|\ge |S_{i}|>(\log n)^4$ would imply that the max degree of $G$ is $\omega((\log n)^2)$ which holds with probability far less than $o(n^{-4})$.
It is also east to see $B$ holds with probability $1-o(n^{-3})$. To see so, conditioning on event $A$, which holds with probability $1=o(n^{-4})$, for every $i$, $|N(v_{t+3i})|<(\log n)^4$ with probability $1-o(n^{-4})$, and the probability that there is a vertex in $N(v_{t+3i})$ being adjacent to 4 vertices in $S_i$ by random edges (a Bernoulli trial occurs with probability $d/n$) is $O(n^{-4}\mathrm{polylog} n)$. By union bound, with probability $1-o(n^{-3})$ all $B_i$ hold simultaneously. 

Note that when $B_i$ occurs, for all vertices $u$ in the neighborhood $N(v_{t+3i})$ we have $d'(u)$ reduced by at most 3 if restricted on $G_i$, so the permissiveness of $v_{t+3i}$ in $(G_P,[q])$ is implied by the permissiveness of $v_{t+3i}$ in $(G_i,[q-3])$. Therefore, it holds that
\ifabs{
\begin{align}
&\Pr\left[\,\forall t\le i<2t, v_i\text{ is not permissive in }(G_P,[q])\,\right]\notag\\
\le 
&\Pr\left[\,A\wedge B\wedge \forall 1\le i\le\ell, v_{t+3i}\text{ is not permissive in }(G_P,[q-3])\,\right]+o(n^{-3})\notag\\
\le
&\Pr\left[\,A\wedge\forall 1\le i\le\ell,  v_{t+3i}\text{ is not permissive in }(G_i,[q-3])\,\right]+o(n^{-3})\notag\\
=
&\prod_{i=1}^{\ell}\Pr[\,v_{t+3i}\text{ is not permissive in }G_i\notag\\
&\qquad\quad\mid A_{i-1},  \forall j\le i-1, v_{t+3j}\text{ is not permissive in }G_j\,]+o(1)\notag\\
=
&\prod_{i=1}^{\ell}\Pr\left[\,v_{t+3i}\text{ is not permissive in }(G_i,[q-3])\mid A_{i-1}\,\right]+o(n^{-3}),\label{eq:not-2-permissive}
\end{align}
}{
\begin{align}
&\Pr\left[\,\forall t\le i<2t, v_i\text{ is not permissive in }(G_P,[q])\,\right]\notag\\
\le 
&\Pr\left[\,A\wedge B\wedge \forall 1\le i\le\ell, v_{t+3i}\text{ is not permissive in }(G_P,[q-3])\,\right]+o(n^{-3})\notag\\
\le
&\Pr\left[\,A\wedge\forall 1\le i\le\ell,  v_{t+3i}\text{ is not permissive in }(G_i,[q-3])\,\right]+o(n^{-3})\notag\\
=
&\prod_{i=1}^{\ell}\Pr\left[\,v_{t+3i}\text{ is not permissive in }G_i\mid A_{i-1},  \forall j\le i-1, v_{t+3j}\text{ is not permissive in }G_j\,\right]+o(1)\notag\\
=
&\prod_{i=1}^{\ell}\Pr\left[\,v_{t+3i}\text{ is not permissive in }(G_i,[q-3])\mid A_{i-1}\,\right]+o(n^{-3}),\label{eq:not-2-permissive}
\end{align}
}
where the last equation is guaranteed by that the permissiveness of $v_{t+3i}$ in $G_i$ is independent of that of $v_{t+3j}$ in $G_j$ for any previous $j<i$, even though the sequence $G_i$ itself is not independent.

Recall that graph $G_i$ is distributed according to $G(|V_i|,d/n)$ conditioning on $(v_{t+3i-1},v_{t+3i},\ldots,v_{2t})$ being a path. Conditioning on $A_{i-1}$, we have $|V_i|\ge n-|S_{i-1}|=(1-o(1))n$. We then analyze the probability of a vertex $u$ being not permissive in such a random graph. Let $X_0=d'(u)$ be the degree of $u$ in $G_i$ contributed by edges not in the path $P$ and for the up to $X_0+2$ neighbors $u_1,u_2,\ldots,u_{X_0+2}$ of $u$, let $X_i=d'(u_i)$ for $i=1,2,\ldots,X_0+2$. Clearly each of $X_i$ is a binomial random variable distributed according to $B(n',d/n)$ with $n'=(1-o(1))n$. Vertex $u$ is permissive in $(G_i,[q-3])$ if $q-3>X_i+3$ for all $i$, thus if $u$ is not permissive in $(G_i,[q-3])$ then either $X_0\ge q-6=2d+17$ or $X_0<2d+17$ and one of the at most $X_0+2\le 2d+18$ random variables $X_1,X_2,\ldots,X_{2d+18}$ has $X_i\ge q-6=2d+17$. By union bound and Chernoff bound, this probability is upper bounded by
\ifabs{
\begin{align*}
&\Pr[\,u\text{ is not permissive in }(G_i,[q-3])\mid A_{i-1}\,]\\
\le&
(2d+19)\Pr[X_0\ge 2d+17]\\
\le& 
(2d+19)\left(\frac{\mathrm{e}^{d+17}}{(2+17/d)^{2d+17}}\right)^{(1-o(1))}\\
<&
\left(\frac{3}{4d^6}\right),
\end{align*}
}
{
\begin{align*}
\Pr[\,u\text{ is not permissive in }(G_i,[q-3])\mid A_{i-1}\,]
\le&
(2d+19)\Pr[X_0\ge 2d+17]\\
\le& 
(2d+19)\left(\frac{\mathrm{e}^{d+17}}{(2+17/d)^{2d+17}}\right)^{(1-o(1))}\\
<&
\left(\frac{3}{4d^6}\right),
\end{align*}
}
for any $d$ and all sufficiently large $n$.

Therefore, \eqref{eq:not-2-permissive} can be bounded as 
\begin{align*}
\Pr\left[\,\forall t\le i<2t, v_i\text{ is not permissive in }(G_P,[q])\,\right]
&\le 
\left(\frac{3}{4d^6}\right)^\ell+o(n^{-3})\\
&=O((3/4)^{t/3}d^{-2t}),
\end{align*}
for $\ell=\floor{(t-1)/3}$ and for any $t\le \frac{\ln n}{\ln d}$.
This proves \eqref{eq:2-permissive-separator}, which implies the lemma.
}
\ifabs{}
{
\begin{proof} 
\ProofPermissiveSeparatorTSAW
\end{proof}
}

\ifabs{
We then observe that the quantity $\err_{T,\clist{L},S}$ decays fast on average. The proof is also in Appendix~\ref{appendix-SSM}.
}
{We then observe that the quantity $\err_{T,\clist{L},S}$ defined in Definition~\ref{definition-SAW-decay} decays fast on average.
}
\begin{lemma}\label{lemma-decay-expectation}
Let $f_q(x)$ be a piecewise function defined as 
\[
f_q(x)=\begin{cases}
\frac{1}{q-x-1} & \mbox{if }x\le q-2,\\
1 & \mbox{otherwise.}
\end{cases}
\]
Let $X$ be a random variable distributed according to binomial distribution $B(n,\frac{d}{n})$ where $d=o(n)$. For $q\ge 2 d+4$, it holds that $\E{f_q(X)}<\frac{1}{d}$.
\end{lemma}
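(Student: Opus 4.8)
The plan is to split the expectation according to whether $X$ is small (so that $f_q(X)=1/(q-X-1)$ with $q-X-1$ comfortably bounded below) or large (where $f_q(X)\le 1$ but the event is exponentially unlikely). Concretely, write
\[
\mathbb{E}[f_q(X)]=\sum_{k=0}^{q-2}\frac{1}{q-k-1}\Pr[X=k]+\sum_{k\ge q-1}\Pr[X=k].
\]
For the tail sum, since $q\ge 2d+4$ we have $q-1\ge 2d+3 > 2d$, so a Chernoff bound (e.g. $\Pr[X\ge 2\mathbb{E}X]\le (\mathrm{e}/4)^{\mathbb{E}X}$, using $\mathbb{E}X=d$ and $d=o(n)$ so the binomial is well-approximated by its Poisson/Chernoff tail) gives $\Pr[X\ge q-1]\le \Pr[X\ge 2d+3]$, which is $O(c^{d})$ for a constant $c<1$; in particular this is much smaller than $\tfrac{1}{2d}$ for all $d>1$ once the constant $\beta$-slack is used. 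The first term I would bound by observing $q-k-1\ge q-1-d+(d-k)$ and, for $k\le d$, $q-k-1\ge q-1-d\ge d+3$, while for $k$ between $d$ and $q-2$ the factor $1/(q-k-1)\le 1$ but $\Pr[X=k]$ is again exponentially small. So the dominant contribution is roughly $\tfrac{1}{q-1-d}\Pr[X\le d]\le \tfrac{1}{d+3}$, plus lower-order terms, and the whole thing is below $1/d$.

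A cleaner way to organize the main term, which I would actually prefer, is a two-regime split at $k=d+1$ (or at some $\theta d$ with $1<\theta<2$): for $k\le d$ use $f_q(k)\le \tfrac{1}{q-1-d}\le \tfrac{1}{d+3}$ and $\Pr[X\le d]\le 1$, contributing at most $\tfrac{1}{d+3}$; for $k\ge d+1$ use the crude bound $f_q(k)\le 1$ together with $\Pr[X\ge d+1]$, which by a Chernoff bound on the upper tail (deviation above the mean $d$) is $O(1)$ but can be made an explicit constant less than $\tfrac{1}{d+3}$ when combined with the finer large-$k$ decay. The inequality $\tfrac{1}{d+3}+(\text{small})<\tfrac1d$ holds because $\tfrac1d-\tfrac1{d+3}=\tfrac{3}{d(d+3)}$, so I need the leftover tail mass times its $f_q$-weight to be below $\tfrac{3}{d(d+3)}$; since that leftover is exponentially small in $d$ this is comfortable for all $d>1$, and for small $d$ (close to $1$) one checks it by hand using $q\ge 2d+4\ge 6$.

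The step I expect to be the main obstacle is making the constants fully explicit and uniform in $d$ down to $d>1$: the Chernoff bound is clean asymptotically, but near $d=1$ the "exponentially small" tail is not so small, and $q$ is only as large as $2d+4=6$, so the numeric slack is tight. I would handle this by using the Poisson approximation to the binomial tail (valid since $d=o(n)$, and monotone in the right direction by standard coupling), namely $\Pr[X\ge m]\le \mathrm{e}^{-d}\sum_{k\ge m}d^k/k!$, and then bounding $\mathbb{E}[f_q(X)]$ directly against the Poisson expectation $\mathrm{e}^{-d}\sum_{k\ge 0} f_q(k)\, d^k/k!$, which for $q\ge 2d+4$ can be estimated term-by-term and shown to be $<1/d$; this reduces everything to a one-parameter inequality in $d$ that is checked analytically for $d$ bounded away from $1$ and numerically for $d$ near $1$.
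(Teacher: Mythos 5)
Your two-regime split does not survive inspection, and the fallback Poisson step has a flaw; both problems are substantive, and the paper takes a very different route.

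The core difficulty, which your plan underestimates, is that the claimed bound is extremely tight. Since $f_q$ is convex and nondecreasing with $f_q(d)=\frac{1}{q-d-1}=\frac{1}{d+3}$ at $q=2d+4$, Jensen already gives $\E{f_q(X)}\ge\frac{1}{d+3}$; the lemma asserts the expectation also stays below $\frac{1}{d}$, so the entire variance contribution must fit inside $\frac{1}{d}-\frac{1}{d+3}=\frac{3}{d(d+3)}$. Your split at $k\approx d$ uses $f_q(k)\le 1$ on $\{X\ge d+1\}$, but $\Pr[X\ge d+1]=\Theta(1)$ (roughly $\frac12$, since the mean is $d$), so that term alone is already $\Theta(1)\gg\frac1d$. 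Splitting instead at $\theta d$ with $\theta>1$ doesn't help: you need the small-$k$ bound $\frac{1}{(2-\theta)d+3}<\frac{1}{d}$, which forces $\theta<1+\frac{3}{d}$, and then $\Pr[X>\theta d]$ is again $\Theta(1)$. No choice of cut point makes a ``small region contributes the main term, large region is negligible'' argument work, because the main contribution to $\E{f_q(X)}-\frac1{d+3}$ comes precisely from $X$ fluctuating by $O(\sqrt d)$ around its mean, where $f_q$ is still close to $\frac1d$ but you are charging it a full $1$. Your backup plan---bound against $\mathrm{e}^{-d}\sum_k f_q(k)d^k/k!$ and ``estimate term-by-term''---has two problems: first, $\mathrm{Poisson}(d)$ does \emph{not} stochastically dominate $B(n,d/n)$ (already $\Pr[\mathrm{Poisson}(p)\ge1]=1-\mathrm{e}^{-p}<p=\Pr[\mathrm{Bernoulli}(p)\ge1]$), so the asserted monotone coupling is not available; second, ``estimated term-by-term and shown to be $<1/d$'' is the lemma restated, not a proof. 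The paper avoids all of this by a different device: it passes to $g(x)=1-f_q(x)$, replaces $g$ by an explicit degree-$6$ polynomial $\tilde g$ with $\tilde g\le g$ on $[1,2d+2]$ and $\tilde g\le 0$ beyond $2d+2$, computes $\E{\tilde g(X)}$ exactly from binomial moments, and then shows the resulting rational function of $d$ exceeds $1-\frac1d$ after subtracting $\tilde g(0)\Pr[X=0]$. That explicit polynomial/moment computation is what absorbs the higher-order fluctuation terms that your split discards; you would need to supply an equivalent quantitative mechanism (e.g.\ a truncated Taylor expansion of $f_q$ around $d$ with controlled remainder plus exact low moments of $X$) to close the gap.
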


\newcommand{\ProofDecayExpectation}{
We denote that $p(k)=\binom{n}{k} \left(\frac{d}{n}\right)^k \left(1-\frac{d}{n}\right)^{n-k}$.
Note that 
\[
1-\E{f_q(X)}=\sum_{k=1}^{q-2}\left(1-\frac{1}{q-k-1}\right)p(k)
\]
is nondecreasing in $q$, so it is sufficient to prove the inequality $1-\E{f_q(X)}>1-\frac{1}{d}$ when $q=\lceil 2d+4\rceil$, which is the following inequality
\begin{equation*}
\sum_{k=1}^{\lceil 2d+2\rceil}g(k)p(k)>1-\frac{1}{d},
\end{equation*}
where the function $g(x)$ is defined as $g(x)=1-\frac{1}{2d+3-x}$.

The function $g(x)$ can be approximated by the following polynomial form: 
\begin{align*}
\tilde{g}(x)&=\frac{d+2}{d+3}-\frac{(x-d)}{(d+3)^2}-\frac{(x-d)^2}{(d+3)^3}-\frac{(x-d)^3}{(d+3)^4}-\frac{(x-d)^4}{(d+3)^5}-\frac{(x-d)^5}{(d+3)^6}-\frac{(x-d)^6}{(d+3)^6}.
\end{align*}
As an illustration, Figure~\ref{fig:poly-approx} shows how well polynomial function $\tilde{g}(x)$ approximates $g(x)$.
\begin{figure}
\centering
  \includegraphics[width=3.4in]{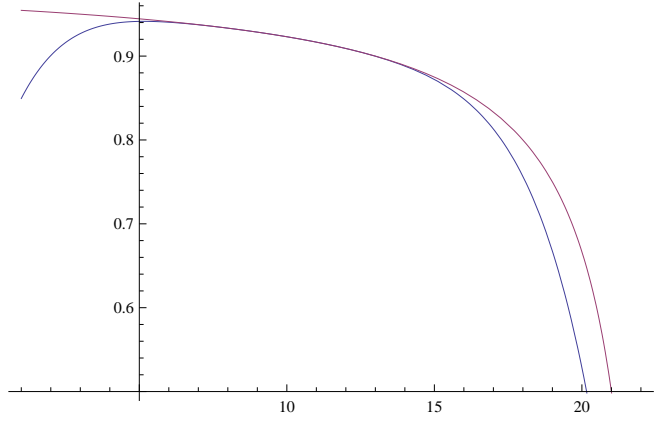}\\
  \caption{$g(x)$ and $\tilde{g}(x)$ when $d=10$.}\label{fig:poly-approx}
\end{figure}
Indeed, it can be verified that
\begin{equation*}
g(x)-\tilde{g}(x)=\frac{(x-d)^6(2d+2-x)}{(d+3)^6(2d+3-x)},
\end{equation*}
thus we have $g(x)>\tilde{g}(x)$ for $x<2d+2$ and $g(2d+2)=\tilde{g}(2d+2)$, and hence it holds that
\begin{equation*}
\sum_{k=1}^{\lceil 2d+2\rceil}g(k)p(k)
>
\sum_{k=1}^{\lfloor 2d+2\rfloor}\tilde{g}(k)p(k).
\end{equation*}
We then show that $\sum_{k=1}^{\lfloor 2d+2\rfloor}\tilde{g}(k)p(k)>1-\frac{1}{d}$, which proves the lemma.

Since $\tilde{g}(x)$ is just a polynomial of $x$ with degree 6, its expectation with binomial input $X\sim B(n,\frac{d}{n})$ can be calculated as
\begin{align*}
\E{\tilde{g}(X)}
=
\frac{1}{(d+3)^6}\Big(
&d^6+17d^5+119d^4+422d^3+867d^2+1012d+486 \\
&+\frac{1}{n}(d^5+63d^4+290d^3+121d^2)-\frac{1}{n^2} (50d^5+498d^4+284d^3)\\
&+\frac{1}{n^3}(15d^6+416d^5+468d^4)-\frac{1}{n^4}(130d^6+384d^5)+\frac{120d^6}{n^5}\Big).
\end{align*}
When $d=o(n)$, we have
$\frac{1}{n}(d^5+63d^4+290d^3+121d^2)-\frac{1}{n^2} (50d^5+498d^4+284d^3)+\frac{1}{n^3}(15d^6+416d^5+468d^4)-\frac{1}{n^4}(130d^6+384d^5)+\frac{120d^6}{n^5}
\ge0$,
thus $\E{\tilde{g}(X)}$ can be lower bounded as
\begin{align*}
\E{\tilde{g}(X)}
&\ge
\frac{d^6+17d^5+119d^4+422d^3+867d^2+1012d+486}{(d+3)^6}\\
&=
\left(1-\frac{1}{d}\right)
+\frac{2d^5+17d^4+192d^3+769d^2+1215d+729}{d(d+3)^6}.
\end{align*}
On the other hand, $\E{\tilde{g}(X)}$ can be decomposed as
\begin{align*}
\E{\tilde{g}(X)}
&=
\tilde{g}(0)\left(1-\frac{d}{n}\right)^n
+\sum_{k=1}^{\lfloor 2d+2\rfloor}\tilde{g}(k)p(k)
+\sum_{k=\lfloor 2d+3\rfloor}^n\tilde{g}(k)p(k).
\end{align*}
For $x=0$, we have $\tilde{g}(0)=1-\frac{(d+3)^6+d^6(2d+2)}{(d+3)^6(2d+3)}<1$.
For $x=\floor{2d+3}$,  we have $\tilde{g}(\floor{2d+3})=g(\floor{2d+3})-\left(\frac{\floor{2d+3}-d}{d+3}\right)^6\frac{2d-\floor{2d}-1}{2d-\floor{2d}}\le 1-\frac{1}{2d-\floor{2d}}-\frac{2d-\floor{2d}-1}{2d-\floor{2d}}\le 0$.
And when $x\ge  2d+3 $, $\tilde{g}(x)$ is monotonically decreasing in $x$, thus $\tilde{g}(x)\le \tilde{g}(2d+3)<0$ for all $x\ge 2d+3$. 
Therefore, it holds that
\begin{align*}
\sum_{k=1}^{\floor{2d+2} }\tilde{g}(k)p(k)-\left(1-\frac{1}{d}\right)
&>
\E{\tilde{g}(X)}-\left(1-\frac{1}{d}\right)
-\left(1-\frac{d}{n}\right)^n\\
&>
\frac{2d^5+17d^4+192d^3+769d^2+1215d+729}{d(d+3)^6}
-\mathrm{e}^{-d},
\end{align*}
which is greater than $0$, since the function
\[
\frac{2d^5+17d^4+192d^3+769d^2+1215d+729}{d(d+3)^6}\mathrm{e}^d
\]
is unimodal and has minimum >1. Therefore, we have
\[
\sum_{k=1}^{\floor{2d+2} }\tilde{g}(k)p(k)>1-\frac{1}{d},
\]
which proves the lemma as argued in the beginning of the proof.
}
\ifabs{}{
\begin{proof}
\ProofDecayExpectation
\end{proof}
}

We then prove a strong spatial mixing theorem with the norm of error function $\err(\mu_v^\sigma,\mu_v^\tau)$.
\begin{lemma}\label{lemma-error-decay-random-graph}
Let $d>1$, $q\ge\alpha d+\beta$ for $\alpha>2$ and $\beta\ge23$, and $t(n)=\omega(1)$ an arbitrary super-constant function.
Let $v\in V$ be arbitrarily fixed and $G=(V,E)$ a random graph draw from $G(n,d/n)$. 
There exist constants $C_1,C_2>0$ depending only on $d$ and $q$ such that with high probability $G$ is $q$-colorable and
\[
\err(\mu_v^\sigma,\mu_v^\tau)\le C_1  \exp(-C_2 \dist(v,\Delta))
\]
for any feasible $q$-colorings $\sigma,\tau\in[q]^\Lambda$ partially specified on a subset $\Lambda\subset V$ of vertices, such that $\sigma$ and $\tau$ differ only on a subset $\Delta\subseteq\Lambda$ with $\dist(v,\Delta)\ge t(n)$.
\end{lemma}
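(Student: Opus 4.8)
The plan is to instantiate the self‑avoiding‑walk analysis of Section~\ref{sec:SAW-tree} with $\clist{L}=[q]$ and feed it Lemmas~\ref{lemma-permissive-separator-TSAW} and~\ref{lemma-decay-expectation}. First dispose of two points: $G$ is $q$-colorable w.h.p.\ (for fixed $d$ the chromatic number of $G(n,d/n)$ is $O(d)\le q$ w.h.p.), so $\mu=\mu_{G,[q]}$ is well defined; and if $\dist_G(v,\Delta)=\infty$ then $v,\Delta$ lie in different components, hence $\mu_v^\sigma=\mu_v^\tau$ and the bound is trivial. So assume $D:=\dist_G(v,\Delta)<\infty$ and $D\ge t(n)$, and put
\[
t:=\min\bigl\{\lfloor(D-1)/2\rfloor,\ \lfloor\ln n/\ln d\rfloor\bigr\},
\]
so that $2t<D$ and $t\le\ln n/\ln d$; since $D\le\mathrm{diam}(G)=O(\log n)$ w.h.p., one also gets $t\ge c_0D$ for a constant $c_0>0$, and for all large $n$, $t\ge g(n):=\min\{\lceil t(n)/3\rceil,\lfloor\ln n/\ln d\rfloor\}=\omega(1)$. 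Applying Lemma~\ref{lemma-permissive-separator-TSAW} with the super‑constant function $g(n)$: w.h.p.\ there is a permissive cutset $S$ in $T:=T_\SAW(G,v)$ for $v$ and $\Delta$ with every vertex of $S$ at tree‑distance in $[t,2t)$ from $v$. Since $(G,[q])$ is feasible, Theorem~\ref{thm-SAW-decay} gives $\err(\mu_v^\sigma,\mu_v^\tau)\le\err_{T,[q],S}$.

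Next reduce $\err_{T,[q],S}$ to a $\Delta$-oblivious walk sum. Unrolling Definition~\ref{definition-SAW-decay} with $\clist{L}=[q]$ (where $\contr(u)=f_q(d_G(u))$, the function of Lemma~\ref{lemma-decay-expectation}) gives $\err_{T,[q],S}=3q\sum_{u\in S}\prod_{w\in P_u,\,w\ne v}f_q(d_G(w))$, with $P_u$ the $v$-to-$u$ path in $T$, i.e.\ a self‑avoiding walk from $v$ in $G$ of length in $[t,2t)$. Since every $f_q\le1$, all summands are nonnegative, and the $P_u$ ($u\in S$) are distinct tree‑paths (as $S$ is an antichain), we may discard the cutset structure; but we must \emph{not} truncate walk lengths, since the subtree hanging below a depth-$t$ vertex can have total contribution exceeding $1$ (its depth-$(t{+}1)$ children, all permissive, may each contribute a factor close to $1$). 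This yields
\[
\err_{T,[q],S}\ \le\ 3q\sum_{\ell=t}^{2t-1}W_\ell,\qquad
W_\ell:=\sum_{\substack{\text{SAW }\pi\text{ in }G,\ \pi_0=v,\ |\pi|=\ell}}\ \prod_{j=1}^{\ell}f_q(d_G(\pi_j)),
\]
and the right‑hand side depends on $\sigma,\tau,\Delta$ only through $t$, hence only through $D$.

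The analytic core is $\E{W_\ell}\le2c^\ell$ for all $\ell=O(\log n)$, with $c:=2/\alpha\in(0,1)$. Write $\E{W_\ell}=\sum_\pi(d/n)^\ell\,\E{\prod_{j=1}^\ell f_q(d_G(\pi_j))\mid\pi\text{ a path}}$ over the $\le n^\ell$ ordered $\ell$-tuples of distinct vertices $\ne v$, and condition also on the internal chords (edges among $\pi_1,\dots,\pi_\ell$ that are not path edges). Then $d_G(\pi_j)=b_j+X'_j$ with $b_j\in\{1,2\}$ plus the number of chords at $\pi_j$, the external degrees $X'_j\sim\mathrm{Bin}(n-\ell-1,d/n)$ being mutually independent (disjoint blocks of edge variables) and independent of the chords. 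Using $f_q(b+x)=f_{q-b}(x)$, $f_{q-b}$ nondecreasing, $f_{q-b_j}\le f_{q-2}$ for $b_j\le2$, the stochastic domination $\mathrm{Bin}(n-\ell-1,d/n)\preceq\mathrm{Bin}(n,(\alpha d/2)/n)$ (valid since $\alpha\ge2$), and Lemma~\ref{lemma-decay-expectation} applied to $f_{q-2}$ with its parameter $d$ replaced by $\alpha d/2$ (legitimate: $q-2\ge\alpha d+4$), one gets $\E{f_{q-b_j}(X'_j)}<c/d$ whenever $b_j\le2$, and $\le1$ always. Hence $\E{\prod_jf_q(d_G(\pi_j))\mid\text{path, chords}}\le(c/d)^\ell(d/c)^{2C}$ where $C$ is the number of internal chords, $C\preceq\mathrm{Bin}(\binom{\ell}{2},d/n)$; averaging over chords, $\E{(d/c)^{2C}}\le\exp(\binom{\ell}{2}\tfrac dn((d/c)^2-1))\le2$ for $\ell=O(\log n)$. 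Thus $\E{W_\ell}\le n^\ell(d/n)^\ell\cdot2(c/d)^\ell=2c^\ell$, so $\E{\sum_{\ell=t}^{2t-1}W_\ell}\le\tfrac{2}{1-c}c^t$.

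Finally, pass from expectation to a w.h.p.\ bound uniform in $\Delta$. Because $\sum_{\ell=t}^{2t-1}W_\ell$ sees $\Delta$ only through $t$, it suffices to control it for every $t\in\{g(n),\dots,\lfloor\ln n/\ln d\rfloor\}$: by Markov, $\Pr[\sum_{\ell=t}^{2t-1}W_\ell>t^2c^t]\le\tfrac{2}{(1-c)t^2}$, and as $g(n)=\omega(1)$ we have $\sum_{t\ge g(n)}\tfrac{2}{(1-c)t^2}=o(1)$; so w.h.p.\ $\sum_{\ell=t}^{2t-1}W_\ell\le t^2c^t$ for all admissible $t$ simultaneously. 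Intersecting with the colorability, diameter and cutset events, $\err(\mu_v^\sigma,\mu_v^\tau)\le\err_{T,[q],S}\le3q\,t^2c^t\le3q\,D^2c^{c_0D}$, and absorbing $D^2$ into the exponential gives $\err(\mu_v^\sigma,\mu_v^\tau)\le C_1\exp(-C_2D)$ with $C_1,C_2>0$ depending only on $d,q$ (through $c,c_0$), as desired. The step I expect to be the main obstacle is this last packaging: a naive union bound over the $\Theta(\log n)$ scales would cost a $\mathrm{polylog}(n)$ factor, fatal because $\dist_G(v,\Delta)$ may be as small as the arbitrarily slowly growing $t(n)$. The resolutions are that (i) the SAW‑tree bound from Lemma~\ref{lemma-permissive-separator-TSAW} is \emph{oblivious to $\Delta$} — it only sees the scale $t$ — so the union bound runs over $O(\log n)$ scales, not over all subsets $\Delta$; and (ii) a $\mathrm{poly}(D)$ prefactor is harmless, so Markov already suffices once one notices $\sum_{t\ge g(n)}t^{-2}=o(1)$. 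Two further delicate points are the non‑truncation of walk lengths in the reduction above, and the chord correction that keeps $\E{W_\ell}$ at $O(c^\ell)$ even for walk lengths up to $2\ln n/\ln d$.
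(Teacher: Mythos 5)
Your proposal is correct and follows essentially the same route as the paper's proof: Lemma~\ref{lemma-permissive-separator-TSAW} supplies the permissive cutset at depth $[t,2t)$, Theorem~\ref{thm-SAW-decay} reduces $\err(\mu_v^\sigma,\mu_v^\tau)$ to $\err_{T,[q],S}$, the latter is bounded by a weighted SAW sum, the expectation is controlled by conditioning on chord edges and invoking Lemma~\ref{lemma-decay-expectation} with the inflated parameter $\approx\alpha d/2$, and a Markov-plus-union-bound over the $O(\log n)$ scales $t$ finishes. Your bookkeeping differs in only cosmetic ways (you handle the chord correction via the binomial MGF of $C$ and bound the chord-touched factors by $1$, rather than using the paper's $f_q(x)/f_q(x-1)\le2$ ratio trick to get the $4^Y$ factor; you pick $t\approx\lfloor(D-1)/2\rfloor$ so that $\dist(v,\Delta)>2t$ is literally satisfied, which is actually a bit more careful than the paper's stated $t=\min(\dist(v,\Delta),\ln n/\ln d)$; and you use a $t^{-2}$ Markov budget in place of the paper's exponentially small one), but none of these change the substance of the argument.
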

\newcommand{\ProofErrorDecayRandomGraph}{
Fix $v\in V$. Let $T=T_\SAW(G,v)$ be the self-avoiding walk tree of $G$. 
Note that for any set $S$ of vertices in $T$, the quantity $\err_{T,[q],S}$ in Definition~\ref{definition-SAW-decay} is always well-defined (even when $G$ is not $q$-colorable) and is a sum of products of the form $3q\prod_{u\in P\atop u\neq v}\contr(u)$, where each product is taken along a self-avoiding walk $P=(v,v_1,\ldots,v_k)$ from $v$ to a vertex $v_k\in S$, and the contribution $\contr(u)$ of each vertex $u$ in the product is given by the following piecewise function 
\[
\delta(u)=\begin{cases}
\frac{1}{q-d_G(u)-1} & \text{if }q>d_G(u)+1,\\
1 & \text{otherwise.}
\end{cases}
\]
Therefore, for any set $S$ of vertices in $T$ satisfying $t\le \dist_T(v,u)<2t$ for all $u\in S$, we have
\begin{align*}
\err_{T,[q],S}
&\le 
\sum_{k=t}^{2t-1}\sum_{P=(v,v_1,\ldots,v_k)}\mathrm{I}[P\text{ is a path]}\cdot 3q\prod_{i=1}^k\contr(v_i),
\end{align*}
where $\mathrm{I}[P\text{ is a path}]$ is the indicator random variable for the event that $P$ is a path in the random graph $G$.

Fix an arbitrary $t(n)\le t\le\frac{\ln n}{\ln d}$. Consider $\err_t=\max_{S}\err_{T,[q],S}$ where the maximum is taken over all vertex set $S$ in $T$ satisfying $t\le \dist_T(v,u)<2t$ for all $u\in S$. By the above argument and linearity of expectation, we have
\begin{align}
\E{\err_t}
&\le
\sum_{k=t}^{2t-1} n^k\left(\frac{d}{n}\right)^k \cdot\E{3q\prod_{i=1}^k\contr(v_i)\biggm{|} P=(v,v_1,\ldots,v_k)\text{ is a path}}\notag\\
&=
3q\sum_{k=t}^{2t-1}d^k\cdot \E{\prod_{i=1}^k\contr(v_i)\biggm{|} P=(v,v_1,\ldots,v_k)\text{ is a path}}.\label{eq:lemma-decay-random-graph}
\end{align}
We then calculate the expectations.
Fix a tuple $P=(v,v_1,\ldots,v_k)$. The degrees $d(v_i)$ in random graph $G$ are not independent. We then construct an independent sequence whose product dominates the $\prod_{i=1}^k\contr(v_i)$ as follows.

Conditioning on $P=(v,v_1,\ldots,v_k)$ being a path in $G$. 
Let $X_1,X_2,\ldots, X_k$ be random variables such that each $X_i$ represents the number of edges between $v_i$ and vertices in $V\setminus\{v_1,\ldots,v_k\}$; and let $Y$ be a random variable representing the number of edges between vertices in $\{v_1,\ldots,v_k\}$ except for the edges in the path $P=(v,v_1,\ldots,v_k)$.
Then $X_1,X_2,\ldots, X_k, Y$ are mutually independent binomial random variables with each $X_i$ distributed according to $B(n-k,\frac{d}{n})$ and $Y$ distributed according to $B({k\choose 2}-k+1,\frac{d}{n})$, and for each $v_i$ in the path we have $d(v_i)=X_i+2+Y_i$ with some $Y_1+Y_2+\cdots +Y_k=2Y$.

Note that $\contr(v_i)=f_q(d(v_i))$ where function $f_q(x)$ is as defined in Lemma~\ref{lemma-decay-expectation}. Note that the ratio $f_q(x)/f_q(x-1)$ is always upper bounded by 2, and we have the identity $f_q(x+1)=f_{q-1}(x)$. 
Thus, conditioning on that $P=(v,v_1,\ldots,v_k)$ is a path, the product $\prod_{i=1}^k\contr(v_i)$ can be bounded as follows:  
\begin{align*}
\prod_{i=1}^k\contr(v_i)
&=
\prod_{i=1}^kf_q(X_i+2+Y_i)
\le 
2^{2Y}\prod_{i=1}^kf_q(X_i+2)
=
4^Y\prod_{i=1}^kf_{q-2}(X_i).
\end{align*}
Let $d'=(q-6)/2$, thus we have $d'>d$. Let $X$ be binomial random variable distributed according to $B(n,\frac{d'}{n})$, thus $X$ probabilistically dominates every $X_i$ whose distribution is $B(n-k,\frac{d}{n})$. Since $X_1,X_2,\ldots,X_k,Y$ are mutually independent conditioning on $P=(v,v_1,\ldots,v_k)$ being a path in $G$, for any $P=(v,v_1,\ldots,v_k)$ we have
\begin{align*}
\E{\prod_{i=1}^k\contr(v_i)\biggm{|} P\text{ is a path}}
&\le
\E{4^Y\prod_{i=1}^kf_{q-2}(X_i)}
\le
\E{4^Y}\E{f_{q-2}(X)}^k.
\end{align*}
Recall that $Y\sim B\left({k\choose 2}-k+1,\frac{d}{n}\right)$, the expectation $\E{4^Y}$ can be bounded as
\begin{align*}
\E{4^Y}
&\le
\sum_{\ell=0}^{k^2} 4^\ell {k^2 \choose \ell}\left(\frac{d}{n}\right)^\ell\left(1-\frac{d}{n}\right)^{k^2-\ell}
=
\left(1+\frac{3d}{n}\right)^{k^2}
\le \exp\left(\frac{3dk^2}{n}\right).
\end{align*}
Recall that $d'=(q-6)/2$ and $X\sim B(n,\frac{d'}{n})$. We have $q-2= 2d'+4$. Due to Lemma~\ref{lemma-decay-expectation}, we have
$\E{f_{q-2}(X)}<\frac{1}{d'}$. Since we assume $q\ge \alpha d+23$, we have $d'=(q-6)/2>\frac{\alpha}{2}d$ for an $\alpha>2$.
Therefore, we have
\begin{align*}
\E{\prod_{i=1}^k\contr(v_i)\biggm{|} P\text{ is a path}}
&\le
\exp\left(\frac{3dk^2}{n}\right)\left(\frac{1}{d'}\right)^k
\le
\frac{1}{d^k}\exp\left(-k\ln\frac{\alpha}{2}+\frac{3dk^2}{n}\right).
\end{align*}
Combined with~\eqref{eq:lemma-decay-random-graph}, we have
\begin{align*}
\E{\err_t}
&\le
3q\sum_{k=t}^{2t-1}\exp\left(-k\ln\frac{\alpha}{2}+\frac{3dk^2}{n}\right)
\le
6qt\exp\left(-t\ln\frac{\alpha}{2}+o(1)\right).
\end{align*}
By Markov's inequality $\err_t\ge 6qt\exp\left(-\frac{t}{2}\ln\frac{\alpha}{2}+o(1)\right)$ with probability at most $\exp\left(-\frac{t}{2}\ln\frac{\alpha}{2}\right)$. By union bound the probability that there exists a $t(n)\le t\le\frac{\ln n}{\ln d}$ such that $\err_t\ge 6qt\exp\left(-\frac{t}{2}\ln\frac{\alpha}{2}+o(1)\right)$ is at most $\sum_{t\ge t(n)}\exp\left(-\frac{t}{2}\ln\frac{\alpha}{2}\right)\le \exp\left(-\frac{t(n)}{2}\ln\frac{\alpha}{2}\right)/(1-\sqrt{2/\alpha})=o(1)$, thus with high probability we have
\[
\err_t\le 6qt\exp\left(-\frac{t}{2}\ln\frac{\alpha}{2}+o(1)\right)
\] 
for all $t(n)\le t\le\frac{\ln n}{\ln d}$.

We define four good events.
\begin{list}{}{}
\item $E_1$ :  $G$  is $q$-colorable;
\item $E_2$ :  $\mathrm{diam}(G)\le \frac{C_d\ln n}{\ln d}$ where $C_d$ is a large constant depending only on $d$;
\item $E_3$ :  the event defined in Lemma~\ref{lemma-permissive-separator-TSAW};
\item $E_4$ :  $\err_t\le 6qt\exp\left(-\frac{t}{2}\ln\frac{\alpha}{2}+o(1)\right)\text{ for all } t(n)\le t\le\frac{\ln n}{\ln d}$.
\end{list}
It is well-known that with our choice of $d$ and $q$, with high probability $G$ is $q$-colorable~\cite{achlioptas2005two} and according to~\cite{chung2001diameter}, with a properly chosen constant $C_d$ depending only on $d$, the diameter of $G$ has $\mathrm{diam}(G)\le \frac{C_d\ln n}{\ln d}$ with high probability, thus $E_1$ and $E_2$ both occur with high probability. By Lemma~\ref{lemma-permissive-separator-TSAW}, $E_3$ occurs with high probability, and we just prove that $E_4$ occurs with high probability. By union bound, with high probability all these four good events occur simultaneously.

With $E_1$ occurring, the Gibbs measure $\mu$ of $q$-coloring of $G$ is well-defined. 
With $E_3$ occurring, for any $t(n)\le t\le\frac{\ln n}{\ln d}$ and any vertex set $\Delta\subset V$ satisfying $\dist_G(v,\Delta)>2t$, there always exists a permissive cutset $S$ in $T=T_\SAW(G,v)$ for $v$ and $\Delta$ such that $t\le \dist_T(v,u)<2t$ for all $u\in S$,
which by Theorem~\ref{thm-SAW-decay}, implies that $\err(\mu_v^\sigma,\mu_v^\tau)\le\err_{T,[q],S}$. 
Since $\err_t$ is the maximum of $\err_{T,[q],S}$ over all such $S$ with $t\le \dist_T(u,v)\le 2t$ for every $u\in S$, we have $\err(\mu_v^\sigma,\mu_v^\tau)\le\err_{T,[q],S}\le \err_t$. 
With $E_4$ occurring, we have $\err_t\le 6qt\exp\left(-\frac{t}{2}\ln\frac{\alpha}{2}+o(1)\right)$. 
With $E_2$ occurring, it always holds that $\dist_G(v,\Delta)\le\frac{10000\ln n}{\ln d}$, thus by setting $t=\min(\dist_G(v,\Delta), \frac{\ln n}{\ln d})$ we can always guarantee both $t=\Theta(\dist_G(v,\Delta))$ and $t(n)\le t\le \frac{\ln n}{\ln d}$.  
Combining everything together, we have
\[
\err(\mu_v^\sigma,\mu_v^\tau)\le\err_{T,[q],S}
\le 
\err_t\le 6qt\exp\left(-\frac{t}{2}\ln\frac{\alpha}{2}+o(1)\right),
\]
with $t=\Theta(\dist_G(v,\Delta))$. When $\alpha >2$, and $n$ sufficiently large, we have 
\[
6qt\exp\left(-\frac{t}{2}\ln\frac{\alpha}{2}+o(1)\right)\le C_1 \exp(-C_2 \dist_G(v,\Delta))
\] 
for some universally fixed constants $C_1,C_2>0$ depending only on $q$. And all these hold together with probability $1-o(1)$.
}
\ifabs{
\begin{proof}[Sketch of Proof]
We only give a sketch of the proof. The full proof is given in Appendix~\ref{appendix-SSM}.

Fix $v\in V$. Let $T=T_\SAW(G,v)$ be the self-avoiding walk tree of $G$. 
Fix an arbitrary $t(n)\le t\le\frac{\ln n}{\ln d}$. Consider $\err_t=\max_{S}\err_{T,[q],S}$ where the maximum is taken over all vertex set $S$ in $T$ satisfying $t\le \dist_T(v,u)<2t$ for all $u\in S$. 
By enumerating all self-avoiding walks $P=(v,v_1,\ldots,v_k)$ from $v$ to a vertex $v_k\in S$, we have
\begin{align*}
\E{\err_t}
&\le
3q\sum_{k=t}^{2t-1}d^k\cdot \E{\prod_{i=1}^k f_q(d_G(v_i))\biggm{|} P=(v,v_1,\ldots,v_k)\text{ is a path}},
\end{align*}
where the function $f_q(x)$ is as defined in Lemma~\ref{lemma-decay-expectation}.
We then calculate the expectations.
Fix a tuple $P=(v,v_1,\ldots,v_k)$. 
We construct an independent sequence whose product dominates the $\prod_{i=1}^k f_q(d_G(v_i))$.

Conditioning on $P=(v,v_1,\ldots,v_k)$ being a path in $G$. 
Let $X_1,X_2,\ldots, X_k$ be such that each $X_i$ is the number of edges between $v_i$ and vertices in $V\setminus\{v_1,\ldots,v_k\}$; and let $Y$ be the number of edges between vertices in $\{v_1,\ldots,v_k\}$ except for the edges in the path $P=(v,v_1,\ldots,v_k)$.
Then $X_1,X_2,\ldots, X_k, Y$ are mutually independent binomial random variables, 
and for each $v_i$ in the path we have $d_G(v_i)=X_i+2+Y_i$ for some $Y_1+Y_2+\cdots +Y_k=2Y$.


Due to the property of function $f_q(x)$ we can bound that
\begin{align*}
\prod_{i=1}^k f_q(d_G(v_i))
&=
\prod_{i=1}^kf_q(X_i+2+Y_i)
\le 
4^Y\prod_{i=1}^kf_{q-2}(X_i).
\end{align*}
Since $X_1,X_2,\ldots,X_k,Y$ are mutually independent conditioning on $P$ is a path,
\begin{align*}
\E{\prod_{i=1}^k f_q(d_G(v_i))\biggm{|} P\text{ is a path}}
&\le
\E{4^Y\prod_{i=1}^kf_{q-2}(X_i)}
\le
\E{4^Y}\E{f_{q-2}(X)}^k,
\end{align*}
where $\E{f_{q-2}(X)}$ can be upper bounded by Lemma~\ref{lemma-decay-expectation}, and $\E{4^Y}$ by the binomial theorem. Then a calculation gives
\begin{align*}
\E{\err_t}
\le
3q\sum_{k=t}^{2t-1}
\E{\prod_{i=1}^k f_{q}(d_G(v_i))\biggm{|} P\text{ is a path}}
&\le
\exp\left(-\Omega(t)\right).
\end{align*}
By Markov's inequality and union bound, with high probability we have
$\err_t\le \exp\left(-\Omega(t)\right)$ 
for all $t(n)\le t\le\frac{\ln n}{\ln d}$.

By~\cite{achlioptas2005two}, w.h.p.~$G$ is $q$-colorable.
By Lemma~\ref{lemma-permissive-separator-TSAW}, w.h.p.~we have good permissive cutset $S$ satisfying the conditions in Lemma~\ref{lemma-permissive-separator-TSAW}, which by Theorem~\ref{thm-SAW-decay}, implies that $\err(\mu_v^\sigma,\mu_v^\tau)\le\err_{T,[q],S}\le \err_t\le \exp\left(-\Omega(t)\right)$.
By~\cite{chung2001diameter}, w.h.p.~the diameter of $G$ is in $O(\frac{\ln n}{\ln d})$, thus we can choose $t=\Theta(\dist(v,\Delta))$ with a $t(n)\le t\le\frac{\ln n}{\ln d}$, which gives us $\err(\mu_v^\sigma,\mu_v^\tau)\le\exp\left(-\Omega(\dist(v,\Delta))\right)$.
\end{proof}
}
{
\begin{proof}
\ProofErrorDecayRandomGraph
\end{proof}
}

\newcommand{\ProofSSMRandomGraph}{
We denote $\Lambda=\partial R$.
Due to Lemma~\ref{lemma-error-decay-random-graph}, with $q\ge \alpha d+23$ for any $\alpha >2$, with high probability the random graph $G\sim G(n,d/n)$ is $q$-colorable and for any feasible $q$-colorings $\sigma,\tau$ partially specified on $\Lambda\subset V$ such that $\sigma,\tau$ differ on a subset $\Delta\subseteq \Lambda$ with $\dist(v,\Delta)\ge t(n)$, it holds that
\begin{align*}
\err(\mu_v^\sigma,\mu_v^\tau)
&=\max_{x,y\in[q]}\left(\log\frac{\mu_v^\sigma(x)}{\mu_v^\tau(x)}-\log\frac{\mu_v^\sigma(y)}{\mu_v^\tau(y)}\right)
\le
\epsilon(\dist(v,\Delta)),
\end{align*}
where $\epsilon(\dist(v,\Delta))=\left(-C_2 \dist(v,\Delta)\right)$ for constants $C_1,C_2>0$ depending only on $d$ and $q$.

Since $\sum_{x}\mu_v^\sigma(x)=\sum_{x}\mu_v^\tau(x)=1$, we have $-\err(\mu_v^\sigma,\mu_v^\tau)\le \log\frac{\mu_v^\sigma(x)}{\mu_v^\tau(x)}\le \err(\mu_v^\sigma,\mu_v^\tau)$.
Let $n$ be sufficiently large so that $\dist(v,\Delta)\ge t(n)$ satisfies the followings:
\[
1-2\epsilon(\dist(v,\Delta))\le \exp\left(\epsilon(\dist(v,\Delta))\right)\le 1+2\epsilon(\dist(v,\Delta)).
\]
For any $x\in[q]$, we have (with the convention 0/0=1) 
\begin{align*}
1-2\epsilon(\dist(v,\Delta))\le
\frac{\Pr[c(v)=x\mid\sigma]}{\Pr[c(v)=x\mid\tau]}
=
\frac{\mu_v^\sigma(x)}{\mu_v^\tau(x)}
\le 1+2\epsilon(\dist(v,\Delta)),
\end{align*}
where $\epsilon(t)=C_1 q\exp\left(-C_2 t\right)$, which implies that 
\[
|\Pr[c(v)=x\mid\sigma]-\Pr[c(v)=x\mid\tau]|\le C_3\exp(-C_4 \dist(v,\Delta)),
\]
for some constants $C_3,C_4>0$ depending only on $d$ and $q$.
}

\ifabs{
With Lemma~\ref{lemma-error-decay-random-graph},  the proof of Theorem~\ref{theorem-ssm-random-graph} is immediate, which is in Appendix~\ref{appendix-SSM}.
}
{
With Lemma~\ref{lemma-error-decay-random-graph},  the proof of Theorem~\ref{theorem-ssm-random-graph} is immediate, which is stated as follows.
\begin{proof}[Proof of Theorem~\ref{theorem-ssm-random-graph}]
\ProofSSMRandomGraph
\end{proof}
}



\ifabs{
\section*{Appendix}
\appendix
\input{appendix}
}
{}

\end{document}